\documentclass[11pt]{article}
\usepackage{amssymb,amsmath,latexsym,proof,psfig}
\usepackage{algpseudocode}
\usepackage{makeidx}
\usepackage{MnSymbol,wasysym}
\usepackage{prooftrees}
\usepackage{eqparbox,array}
\usepackage{graphicx}
\usepackage{pstricks}
\usepackage{pst-node}
\usepackage{textcomp}
\usepackage{amsthm} 
\SpecialCoor

\title{A Polynomial Decision for \thsat}
\author{M Angela Weiss\\
\small{weiss@ime.usp.br, 
       Universidade de S\~{a}o Paulo, S\~{a}o Paulo - Brasil}
       }
\date{}
\pagestyle{plain}
\newtheorem{theorem}{Theorem}[section]
\newtheorem{lemma}[theorem]{Lemma}

\newtheorem{proposition}[theorem]{Proposition}
\newtheorem{definition}[theorem]{Definition}

\newtheorem{example}[theorem]{Example}
\newtheorem{notation}[theorem]{Notation}
 
\newtheorem{observation}[theorem]{Observation}
\newtheorem{procedure}[theorem]{Procedure}

\newtheorem{pseudocode}[theorem]{Pseudocode}

\newcommand{\letter}{{\mathsf Letter}}

\newcommand{\clndr}{{\mathbb Clndr}}

\newcommand{\clsddg}{\mathcal{CSD}}

\newcommand{\thsat}{{\bf 3-sat}}
\newcommand{\twsat}{{\bf 2-sat}}
\newcommand{\sat}{{\bf sat}}

\makeindex
\begin{document}
\maketitle
\begin{abstract}
  We propose a polynomially bounded, in time and space, method
  to decide whether a given \thsat\ formula is satisfiable or not.

  The tools we use here are, in fact, very simple. We first decide
  satisfiability for a particular \thsat\ formula, called {\em pivoted \thsat}
  and, after a plain transformation, still keeping the polynomial
  boundaries, it is shown that \thsat\ formula is unsatisfiable if and only if
  its corresponding pivoted formula is unsatisfiable.
\end{abstract}
\section{Introduction}\label{INTR:sec}

After the preliminary definitions that establish a common language, we  define
a {\em Pivoted \thsat\ formula}\index{Pivoted \thsat\ formula}, the main subject we deal with.
After that, we show that any \thsat\ formula has a stronger version in the format of a Pivoted \thsat in the sense that a Pivoted \thsat\ formula is unsatisfiable if and only if the original \thsat\ formula is unsatisfiable and, if the original \thsat\ formula is a theorem (true under any valuation), its pivoted version is not a theorem

Secondly, we show that we can solve, that is, we can decide if the pivoted formula is {\bf Satisfiable} or {\bf Unsatisfiable}, in a polynomial use of space and time.
Our resolution methods will require no more than basic knowledge of Boolean
Logic and some basic skills.

An excellent way to start on the Complexity topic is by reading Scott Aaronson
breathtaking survey,

https://www.scottaaronson.com/papers/pnp.pdf

The question is, ``What is computing?''. What kind of reasoning, operation,
can be performed using an algorithm'' and how fast the operation can be
performed. See \cite{ASBB} for a general overview, or consult the roots, 
\cite{CK71}, \cite{RMO1960}, \cite{RMO1959}, \cite{HS1965},
\cite{CKRCK79}, \cite{LLII}, \cite{LLI}, \cite{MST72}.

A precise definition of {\em algorithm} was given by Alan Turing in
1937 (see \cite{AT1937}). The natural question that arises is: {\em What is the
  computational difficulty of performing such an algorithm?} See, in
chronological order, \cite{RMO1959}, \cite{RMO1960}, \cite{LLII},
\cite{HS1965}, \cite{ASBB}.
Classification of complexity is found in
\cite{M87}, \cite{PAPA94} and \cite{ASBB}.
We aim to open a new trail to understand
complexity questions on polynomially solving our problem that 
obviously can be solved in exponential time and space and draw new
lines on the complexity classification.

In Section \ref{BCKGRND:sec}, we provide classical definitions of
valuation for Boolean Logic, define SAT, pose some more standard
definitions, and introduce Pivoted \thsat\, the special kind of \thsat\ we focus on. To solve the central problem, decide satisfiability or unsatisfiability, of Pivoted \thsat\, we write a digraph, the {\em cylindrical digraph}, a digraph where disjunctions of formulas play the rule of edges
endowed with labels.
We show that, given a \thsat\ formula $\Psi$, there is a pivoted \thsat\ formula
$\Psi_{T}$ so that $\Psi$ is unsatisfiable if and only if $\Psi_{T}$ is unsatisfiable.

After building the cylindrical digraph, we detach parts of the cylindrical digraph, the {\em closed digraphs} that mark unsatisfiable combinations of formulas in a pivoted \thsat.

In Section \ref{LASTIDEA:sec}, we fully describe the algorithms we use in
the process of deciding whether the closed digraph represents all
$\mathbf{2}^{\mathbf{m}}$ combinations of valuation over the conjugated literals of a Pivoted \thsat.
We will not write down all possible combinations to have a polynomial bound in our solution. Rather,
we analyze the question there is a  satisfiable combination by generating a digraph.

Section \ref{THPVTH} is the connecting link between a \thsat\ formula and a pivoted \thsat\ formula. We demonstrate that given a \thsat\ formula $\Psi$, then there exists a strong version of $\Psi$, $\Psi_{T}$, so that $\Psi$ is unsatisfiable if and only if $\Psi_{T}$ is unsatisfiable.

Section  \ref{BC:sec} is quite a technical section and, as usual, a
profound and critical reading follows after unraveling the basic Section,
\ref{BCKGRND:sec} and the foundation part, Section \ref{LASTIDEA:sec}.

Finally, a very long part of this paper is devoted to the examples, Section
\ref{EX:sec}. This section can be regarded as a companion section.
\section{Basic Definitions}\label{BCKGRND:sec}
We work with a Boolean Language whose basic symbols are
$\vee,\wedge,\neg$ endowed with a finite set of
atoms $\mathcal{A}$. The set of literals, $\mathcal{L}$, is the set
$\mathcal{A}\cup\{\neg p|p\in\mathcal{A}\}$. A pair formed by a literal
together with its negation is called a conjugated pair. Besides a finite set of atoms, we have the symbols $\top$ and $\bot$.

Here we give the (classical) valuation definition over a Boolean
formula. That is the definition we will use henceforth.

\begin{definition}\label{BOOLFOR:def}
Given a finite set of {\em atoms}, $\mathcal{A}$, $\psi$ belongs to
the set of Boolean formulas $\mathtt{F}$ if
\begin{enumerate}
\item $\psi\in\mathcal{A}$;
\item $\psi=\chi\wedge\phi$ and both $\chi$ and $\phi$ belong to $\mathtt{F}$;
\item $\psi=\chi\vee\phi$ and both $\chi$ and $\phi$ belong to $\mathtt{F}$;
\item $\psi=\neg\chi$ and $\chi$ belongs to $\mathtt{F}$;
\item $\psi$ is either $\top$ or $\bot$.
\end{enumerate}

We use the symbols
\begin{enumerate}
\item $\psi\implies\chi$ to denote $\neg \psi\vee\chi$;
\item $\psi\equiv\chi$ to denote $(\psi\implies\chi)\wedge(\chi\implies\psi)$.
\end{enumerate} 
\end{definition}

\begin{definition}\label{VALUA:def}
A valuation over a set of atoms $\mathcal{A}$ is a mapping $v$ from
$\mathcal{A}$ to the set $\{True, False\}$ and, as usual, we extend the mapping $v$ from a formula $\psi$ to by reducing the complexity of writing a
formula. Define the extension mapping $v$ to a mapping $v'$ from the formulas into the set $\{True, False\}$ as above,
\begin{enumerate}
\item If $\psi\in\mathcal{A}$, $v'(\psi)=True$ if $v(\psi)=True$;
\item $\psi=\chi\wedge\phi$, $v'(\psi)=True$ if $v'(\chi)=True$ {\em and}
  $v'(\phi)=True$;
\item $\psi=\chi\vee\phi$, $v'(\psi)=True$ if $v'(\chi)=True$
  {\em or} $v'(\phi)=True$;
\item $\psi=\neg\chi$, $v'(\psi)=False$ if and only if $v'(\chi)=True$;
\item $v'(\bot)=False$ and $v'(\top)=True$.
\end{enumerate}
If we do not have $v'(\psi)=True$, then $v'(\psi)=False$.
\end{definition}

\begin{definition}[SAT]\label{SAT:dfn}
A formula $\chi$ in Boolean Logic is said {\em satisfiable}\index{satisfiable}
if there is a valuation $v$ from the set $\{True, False\}$ onto the set
of atoms of $\chi$ so that  $v(\chi)$ is {\bf True}. If no such
valuation exists, that is $v(\chi)$ is {\bf False} for any valuation,
then we say that $\chi$ is {\em unsatisfiable}\index{unsatisfiable}.
\end{definition}

Any formula in the Boolean Logic is {\em decidable}, that is, given a
formula $\chi$, we can decide if $\chi$ is {\bf satisfiable} (True for some
valuation) or {\bf unsatisfiable} (False for any valuation). 
The complexity problem consists of deciding between the values, 
{\bf True} or {\bf False} in the most optimal way, see 
\cite{PAPA94}.

We use the symbols: $AND$, $OR$, $NOT$ and $IMPLIES$ to denote
external logical symbols.

\begin{definition}\label{CNN&MIN:dfn}
  A \twsat\ formula\index{\twsat\ formula} $\Psi$ is a conjunction of a finite
  number of  a disjunction of at most two literals. We write
\[
\Psi\equiv (l_{1}^{1}\vee l_{1}^{2})\wedge\dots\wedge
       (l_{\mathbf{s}}^{1}\vee l_{\mathbf{s}}^{2})\equiv
C_{1}\wedge\dots\wedge C_{\mathbf{s}}
\]

A subformula $C_{k}\equiv l_{k}^{1}\vee l_{k}^{2}$,
$1\leq k\leq \mathbf{s}$ of $\Psi$ is called a {\em clause}\index{clause}.

A \thsat\ formula\index{\thsat\ formula} $\Psi$ is a conjunction of a
finite number of 
a disjunction of at most three literals. We write
\[
\Psi\equiv (l_{1}^{1}\vee l_{1}^{2}\vee l_{1}^{3})\wedge\dots\wedge
       (l_{\mathbf{s}}^{1}\vee l_{\mathbf{s}}^{2}\vee l_{\mathbf{s}}^{5})\equiv
C_{1}\wedge\dots\wedge C_{\mathbf{s}}
\]

A subformula $C_{k}\equiv l_{k}^{1}\vee l_{k}^{2}\vee l_{k}^{3}$,
$1\leq k\leq \mathbf{s}$ of $\Psi$ is called a {\em clause}\index{clause}.
\end{definition}

Define pivoted \thsat\ formulas,
\begin{definition}\label{CARD:dfn}
A Pivoted \thsat\ formula\index{Pivoted \thsat\ formula} is a \thsat\ formula of the form,
\[
\begin{array}{l}
  (a_{1}\vee p_{1}^{1}\vee q_{1}^{1})\wedge\dots\wedge
  (a_{1}\vee p_{k_{a_{1}}}^{1}\vee q_{k_{a_{1}}}^{1})\wedge\\
  (\neg a_{1}\vee r_{1}^{1}\vee s_{1}^{1})\wedge\dots\wedge
  (\neg a_{1}\vee r_{k_{\neg a_{1}}}^{1}\vee s_{k_{\neg a_{1}}}^{1})\wedge\dots\wedge\\
  (a_{\mathbf{m}}\vee p_{1}^{\mathbf{m}}\vee q_{1}^{\mathbf{m}})\wedge\dots\wedge
 (a_{\mathbf{m}}\vee p_{k_{a_{\mathbf{m}}}}^{\mathbf{m}}\vee q_{k_{a_{\mathbf{m}}}}^{\mathbf{m}})\wedge\\
  (\neg a_{\mathbf{m}}\vee r_{1}^{\mathbf{m}}\vee s_{1}^{\mathbf{m}})\wedge\dots\wedge
  (\neg a_{\mathbf{m}}\vee r_{k_{a\neg _{\mathbf{m}}}}^{\mathbf{m}}\vee s_{k_{\neg a_{\mathbf{m}}}}^{\mathbf{m}}) 
\end{array}
\]
or, in factorized form,
\[
\begin{array}{l}
  (a_{1}\vee \big( (p_{1}^{1}\vee q_{1}^{1})\wedge\dots\wedge
  (p_{k{a_{1}}}^{1}\vee q_{k_{a_{1}}}^{1})\big)\wedge\\
  (\neg a_{1}\vee\big( (r_{1}^{1}\vee s_{1}^{1})\wedge\dots\wedge
  (r_{k_{\neg a_{1}}}^{1}\vee s_{k_{\neg a_{1}}}^{1})\big)\wedge\dots\wedge\\
  (a_{\mathbf{m}}\vee\big((p_{1}^{\mathbf{m}}\vee q_{1}^{\mathbf{m}}) \wedge\dots\wedge
 (p_{k_{a_{\mathbf{m}}}}^{\mathbf{m}}\vee q_{k_{a_{\mathbf{m}}}}^{\mathbf{m}})\big)
 \wedge\\
  (\neg a_{\mathbf{m}}\vee\big( (r_{1}^{\mathbf{m}}\vee s_{1}^{\mathbf{m}})
  \wedge\dots\wedge
  (r_{k_{a\neg _{\mathbf{m}}}}^{\mathbf{m}}\vee s_{k_{\neg a_{\mathbf{m}}}}^{\mathbf{m}}) \big)
\end{array}
\]
where the set of {\em pivots}\index{Pivot}, $\{a_{1},\neg a_{1},\dots,
a_{\mathbf{m}},\neg a_{\mathbf{m}}\}$ has no intersection with the set of entries\index{entries},
\[
\begin{array}{l}
\{p_{1}^{1},q_{1}^{1},\dots,p_{k{a_{1}}}^{1}, q_{k_{a_{1}}}^{1},
r_{1}^{1},s_{1}^{1},\dots,r_{k_{\neg a_{1}}}^{1},s_{k_{\neg a_{1}}}^{1},\dots,\\
p_{1}^{\mathbf{m}},q_{1}^{\mathbf{m}},p_{k_{\mathbf{m}}}^{\mathbf{m}},
q_{k_{\mathbf{m}}}^{\mathbf{m}},
r_{1}^{\mathbf{m}}, s_{1}^{\mathbf{m}},\dots,
r_{k_{\mathbf{m}}}^{\mathbf{m}}, s_{k_{\mathbf{m}}}^{\mathbf{m}}\}
\end{array}
\]

Denote the pivot $a_{i}$ by $1i$ and its conjugated pair $\neg a_{i}$ by $2i$

Call the set of entries of $a_{i}$\index{set of entries of $a_{i}$}
the set of conjunctions,
\[
(p_{1}^{i}\vee q_{1}^{i})\wedge\dots\wedge(p_{k{a_{1}}}^{i}\vee q_{k_{a_{1}}}^{i})
\]
and the set of entries\index{set of entries of $\neg a_{i}$} of $\neg a_{i}$ the set of conjunctions,
\[
(r_{1}^{i}\vee s_{1}^{i})\wedge\dots\wedge
(r_{k_{\neg a_{1}}}^{i}\vee s_{k_{\neg a_{1}}}^{i})
\]
\end{definition}

\begin{definition}\label{COMPL:dfn}
  Given a Pivoted formula $\Psi$, replace any clause of the form
  $a_{i}\vee p$, where $a_{i}$ is a pivot, by the clauses
  $(a_{i}\vee p\vee q_{p})\wedge(a_{i}\vee p\vee \neg q_{p})$, where the pairs
  of conjugated, $q_{p}$ and $\neg q_{p}$ are fresh new literals.

  A pivoted \thsat formula is said {\em complete}\index{complete \thsat} if
  all of its clauses contain three literals.
\end{definition}

\begin{lemma}\label{COMPL:lmm}
  Any pivoted \twsat\ formula is logically equivalent to its complete version.
\end{lemma}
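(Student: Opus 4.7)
The plan is to reduce the lemma to verifying a single atomic replacement and then to show, by a direct semantic argument, that the pair of new clauses is equivalent to the original binary clause because the fresh literal $r_q$ acts as a truth-value parameter that can be chosen independently of the rest of the valuation.

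First I would observe that the completion operation acts clause-by-clause, so since conjunction is associative and commutative, it suffices to prove that a single clause $(a_i \vee q)$ is logically equivalent (as Boolean formulas over the enlarged set of atoms, noting that $r_q$ is fresh) to $(a_i \vee q \vee r_q) \wedge (a_i \vee q \vee \neg r_q)$. The full equivalence then follows by replacing one binary clause at a time while the remaining conjuncts stay untouched; freshness of each newly introduced pair $r_q, \neg r_q$ guarantees that no atom is reused across successive replacements and so no unintended interaction occurs.

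Next I would verify the two semantic implications. For the direction $(a_i\vee q) \Rightarrow (a_i\vee q\vee r_q)\wedge(a_i\vee q\vee\neg r_q)$: given any valuation $v'$ of the enlarged atom set whose restriction $v$ satisfies $a_i\vee q$, both extended clauses are immediately satisfied since each already contains the true literal. For the converse direction, given a valuation $v'$ satisfying both $(a_i\vee q\vee r_q)$ and $(a_i\vee q\vee\neg r_q)$, exactly one of $v'(r_q)$, $v'(\neg r_q)$ is $\mathtt{True}$; consequently the other clause must be satisfied by one of $a_i$ or $q$, so $v'(a_i\vee q)=\mathtt{True}$. Finally, since $r_q$ does not occur elsewhere, any valuation of the original formula extends (arbitrarily on $r_q$) to a valuation of the completed formula, and vice versa by restriction, yielding equisatisfiability in the strong sense of logical equivalence modulo the fresh atoms.

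The only subtlety I expect is bookkeeping around the freshness assumption: I would state explicitly that the introduced atoms $r_q$ are chosen disjoint from $\mathcal{A}$ and from each other across distinct binary clauses, so that the iterated replacement does not accidentally identify two fresh literals or collide with a pivot or entry. Beyond this, the argument is a routine distributivity observation, essentially the identity $(X\vee r)\wedge(X\vee \neg r)\equiv X$ applied with $X=a_i\vee q$.
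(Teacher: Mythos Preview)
Your proposal is correct. The argument reduces to the propositional identity $(X\vee r)\wedge(X\vee\neg r)\equiv X$ with $X=a_i\vee q$, applied clause-by-clause under the freshness hypothesis, and you handle the extension/restriction of valuations to the enlarged atom set carefully enough to justify calling the result a logical equivalence rather than merely an equisatisfiability.

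The paper itself does not give a self-contained proof of this lemma: it simply refers to Appendix~\ref{THPVTH}, whose actual content is the tableaux-based verification of the broader transformation of arbitrary \thsat\ formulas into pivoted form (Theorem~\ref{THSATPVTHSAT}). The specific completion step is not singled out there; to the extent the appendix covers it, the method is semantic tableaux (branching on $T/F$ nodes and closing branches). Your route is therefore different in presentation---a direct truth-table/distributivity argument rather than a tableaux derivation---and is both shorter and more transparent for this particular lemma. The tableaux machinery buys uniformity with the rest of the appendix's equivalence proofs, whereas your approach buys a one-line algebraic justification that makes the role of the fresh literal $r_q$ explicit.
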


From now on, we work with complete \thsat\ pivoted formulas.

It is well known that \twsat\ formulas can be polynomially solved, a
result originally shown in \cite{APT1979}. See \cite{PAPA94}. We will store
unsatisfiable groups of pivoted \thsat\ formulas in groups of \twsat.

\begin{proposition}\label{PAPA:prp}
  Given \twsat\ formula $\Psi$,
  \[
  (p_{1}\vee q_{1})\wedge\dots\wedge(p_{k}\vee q_{k})
  \]
  $\Psi$ is unsatisfiable if and only if it there is a pair of
  conjugated literals, $\{a,\neg a\}$ and a set of conjugated literals 
  \[
  \begin{array}{l}
 \{b_{1},\neg b_{1}\dots,b_{r},\neg b_{r},c_{1},\neg c_{1},\dots,c_{s},\neg c_{s}\}
  \end{array}
  \]
  so that
  \[
  (a\vee b_{1})\wedge(\neg b_{1}\vee b_{2})\wedge \dots\wedge(\neg b_{r}\vee a)
  \]
  and
  \[
  (\neg a\vee c_{1})\wedge(\neg c_{1}\vee c_{2})\wedge\dots\wedge
  (\neg c_{s}\vee\neg a)
  \]
 are subformulas of $\Psi$. 
\end{proposition}

Explore the above result as a support tool to achieve our claim.
We store the data contained in a pivoted \thsat\ formula in a cylindrical digraph.

\begin{definition}\label{CYL:dfn}
Let $\Psi$ be a pivoted (complete) \thsat\ formula. The
{\em cylindrical digraph} associated to $\mathcal{P}$\index{cylindrical
  digraph}, $\clndr=\langle V,E,label\rangle$\index{$\clndr=\langle V,E,label\rangle$}, where $V$ is a set of vertices, $E$ is a set of edges and
 $label$ is a mapping from $E$ to the set of parts of the set of entries is,
\begin{enumerate}
\item The set of vertices is the set of literals together with their negations;
\item If $a \vee b$ belongs to the set of entries
  $\{i_{1}j_{1},i_{2}j_{2},\dots,i_{k}j_{k}\}$, then,
  \begin{enumerate}
   \item $\neg a\Rightarrow b$ and $\neg b\Rightarrow a$ belong to the
     set edges, $E$; 
   \item $i_{1}j_{1},i_{2}j_{2},\dots,i_{k}j_{k}$ is the label of both edges
     $\neg a\Rightarrow b$ and $\neg b\Rightarrow a$.
  \end{enumerate}
\end{enumerate}
\end{definition}

Sequences of vertices connected by $\Rightarrow$ are a path, as we define below,
\begin{definition}\label{INTERVL:dfn}
Let  $a$ and $b$ be two literals, there is a path\index{path} from $a$
to $b$ if there is a subdigraph of
$\clndr$ the form
\begin{eqnarray}\label{SEQNCVS:eq}
  a\stackrel{l_{1}}{\Rightarrow} c^{1}\stackrel{l_{2}}{\Rightarrow} c^{2}
  \stackrel{l_{3}}{\Rightarrow}\dots
  \stackrel{l_{k-1}}{\Rightarrow} c^{k-1}
  \stackrel{l_{k}}{\Rightarrow}b
\end{eqnarray}
\end{definition}

Some subgraphs of the cylindrical digraph will be associated to combinations of
unsatisfiable formulas and we keep track of these subgraphs and, thus, we can
decide, in an optimal way whether they represent or not all possible unsatisfiable combinations.

We search for sequences of the form,
\begin{equation}\label{UNS:eqn}
\begin{split}
  \neg a\Rightarrow b^{1}\Rightarrow b^{2}\Rightarrow\dots\Rightarrow b^{t_{1}} \Rightarrow a\\
and\\
  a\Rightarrow c^{1}\Rightarrow c^{2}\Rightarrow\dots\Rightarrow c^{t_{2}}
  \Rightarrow \neg a
\end{split}
\end{equation}

We conclude that a Pivoted \thsat\ formula $\Psi$ has an unsatisfiable combination, with the entries chosen among the set of labels over sequences depicted in Equation \ref{UNS:eqn}.

\begin{notation}\label{N2M:ntt}
Let $\mathbf{2}^{\mathbf{m}}$, the set of all mappings from $\mathbf{m}$ to
$\mathbf{2}=\{1,2\}$. Write an element of $\mathbf{2}^{\mathbf{m}}$ as
$\{1j_{1},2j_{2},\dots,\mathbf{m}j_{\mathbf{m}}\}$, that is, the above
set represents the graphic of a mapping $f:\mathbf{m} \mapsto\mathbf{2}$ and an 
{\em entry}\index{entry}, $ij$ has the meaning that $f(i)=j$.

We identify an entry $a_{p}$ with $\mathbf{p}1$ and an entry $\neg a_{p}$
with $\mathbf{p}2$, $1\leq\mathbf{p}\leq\mathbf{m}$.
\end{notation}

\begin{definition}\label{FACT:dfn}
Given a Pivoted \thsat\ formula, $\mathcal{C}$, our task is to verify, in
the most efficient way, regarding the use of time and space, whether for
all all combinations $\sigma$ in  $\mathbf{2}^{\mathbf{m}}$, for all literals in
$\sigma$ regarded as $False$ whether the resulting $\twsat$\ formula is unsatisfiable, or, in a specular rephrasing,
we verify whether there is an element  
$\sigma=\{1j_{1},2j_{2},\dots,\mathbf{m}j_{\mathbf{m}}\}$
in $\mathbf{2}^{\mathbf{m}}$, in which the combination,
\begin{equation}\label{PC:eqn}
  \bigwedge_{1\leq r\leq \mathbf{m}}\{\wedge(p^{rj_{r}}\vee q^{rj_{r}})|
                              (p^{rj_{r}}\vee q^{rj_{r}})\in~contents~rj_{r}\}
\end{equation}
is satisfiable. 
\end{definition}

In short, we ask if the Pivoted \thsat\ formula is {\em unsatisfiable} or
{\em satisfiable}, by first making valuations ranging over the set of pivots.
Using Proposition \ref{PAPA:prp}, we search the sequences that lead to
unsatisfiable formulas and, thereafter, we decide whether we wrote all possible
combinations or not.

We set the keystones by defining the cylindrical digraph associated with a given
pivoted \sat\ formula. Any disjunction contained in an entry plays the role of a vertex in a cylindrical digraph.

\begin{definition}\label{IP:dfn}
Given a Pivoted Formula $\Psi$, an interval $[p,q]$\index{interval $[p,q]$} is the subdigraph that contains all sequences in between $p$ and $q$ and contains no loops, that is there  is not a vertex $r$ so that a sequence 
\[
  r\Rightarrow\dots\Rightarrow r
\] 
belongs to $[p,q]$ and the set of
{\em necessarily true literals}\index{necessarily true  literals},
$\mathcal{NEC}$, is the set of all pair of conjugated literals so that
the intervals $[\neg a,a]$ and $[a,\neg a]$ are non-empty.
\end{definition}

\begin{definition}\label{NEC:dfn}
  A pair of {\em necessarily true literals}\index{necessarily true literals}
  is a conjugated pair of literals, $\{a,\neg a\}$ so that there are paths
  from $a$ to $\neg a$ and from $\neg a$ to $a$.

  The set of necessarily true literals is denoted by
  $\mathcal{NEC}$\index{$\mathcal{NEC}$}.
\end{definition}

For each pair of conjugated literals, $\{a,\neg a\}\subseteq\mathcal{NEC}$,
we write distinct copies of $[\neg a,a]$ and $[a,\neg a]$. As usual, distinct copies are fabricated using Cartesian Products to differentiate intervals.

\begin{definition}\label{CSDDG:dfn}
  Given a cylindrical digraph, suppose that the cardinality of
  $\mathcal{NEC}$ is an even number $r\geq 2$ and the set of necessarily
  true literals is $\{p_{1},\neg p_{1},\dots,p_{r/2},\neg p_{r/2}\}$.
  Define the {\em set of closed
  digraph}\index{set of closed digraph}, $\clsddg$\index{$\clsddg$} as the
  union of all digraphs of the form
  $[\neg p_{l},p_{l}]\oplus [p_{l},\neg p_{l}]$\index{$[\neg p_{l},p_{l}]\oplus
  [p_{l},\neg p_{l}]$}, 
  
  \begin{enumerate}
  \item $[p_{l},\neg p_{l}]\times\{2l\}$, $1\leq l\leq r/2$;
  \item $[\neg p_{l},p_{l}]\times\{2l-1\}$, $1\leq l\leq r/2$.
  \end{enumerate}
  Define the edges and labels as the edges and labels inherited by the
  projection to the interval and endowed with the
  identification $(p_{2l-1},2l-1)\sim (p_{2l},2l)$
\end{definition}

\begin{definition}\label{CHAIN:dfn}
  A {\em chain}\index{chain} is a sequence contained in the set of closed
  digraphs,
\[
\begin{array}{l}
(\neg a,2l-1)\stackrel{l_{0}}{\Rightarrow}(c^{1},2l-1)\stackrel{l_{1}}{\Rightarrow}
(c^{2},2l-1)\stackrel{l_{2}}{\Rightarrow}\dots \stackrel{l_{k-1}}{\Rightarrow}
  (c^{k},2l-1)\stackrel{l_{k}}{\Rightarrow} \\(a,2l-1)\sim(a,2l)\\
  \stackrel{l_{k+1}}{\Rightarrow}(c^{k+1},2l)\stackrel{l_{k+2}}{\Rightarrow}
  (c^{k+2},2l) \stackrel{l_{k+2}}{\Rightarrow}\dots \stackrel{l_{t-1}}{\Rightarrow}
  (c^{t+1},2l)\stackrel{l_{t}}{\Rightarrow} (\neg a,2l)
  \end{array}
\]
The set of all chains\index{chains} is denoted by $\mathcal{P}$\index{$\mathcal{P}$} and called {\em Expansion of the Closed Digraph}\index{Expansion of the Closed Digraph}.

Consider the set of all chains in a closed digraph. Enumerate them by
$\{1,\dots,s\}$. To each chain $Ch_{t}$ enumerated by $t$, consider the
Cartesian Product $Ch_{t}\times\{t\}$, labels, edges inherited by the labels and edges in $Ch$. 
\end{definition}

\begin{definition}\label{P2CLSD:dfn}
  The mapping \index{$Ch\mathcal{P}\mapsto \clsddg$}
  \[
  \begin{array}{ll}
   Ch:&\mathcal{P}\mapsto \clsddg\\
      & (v,i)\rightarrow v
   \end{array}
  \]
  is the projection of vertices of $\mathcal{P}$ onto $\clsddg$.

  The {\em expansion to $\mathcal{P}$}\index{expansion to $\mathcal{P}$} is
  the fixed (exponential) writing of all maximal chains contained in a
  closed digraph.
  The enumeration of the chains is arbitrary and, from now on, remains fixed.

  Given the set of chains, if $(v,t)$ belongs to a chain in
  $\mathcal{P}$, then $v$ is a vertex in the cylindrical digraph and 
  $(u,v)$ is an edge of the  cylindrical digraph if there is a chain $Br_{t}$
  and two arrows $(u,t)\Rightarrow(v,t)$ in $Br_{t}$, in $Br_{t}$.
\end{definition}

\begin{definition}\label{P2CLSD2:dfn}
  Consider an edge $e=((u,t)\Rightarrow (v,t))$ in $\mathcal{P}$. Define
  $Ch(e)$\index{$Ch(e)$} as
  \[
    Ch(u,t)\Rightarrow Ch(v,t)
  \]
\end{definition}

\begin{definition}\label{CPT2:dfn}
  A {\em compatible set of entries}\index{compatible set of
  entries} is a set $U=\{i_{1}j_{1},\dots,i_{n}j_{n}\}$ so that
  for all pair $i_{r}$ and $i_{s}$, if $i_{r}=i_{s}$ then
  $j_{r}=j_{s}$. If a set of entries is not compatible, it is said an
  incompatible set.
  
  A set of edges $\mathbb{E}=\{e_{1},e_{2},\dots,e_{k}\}$
  is said {\em incompatible}\index{incompatible} if the union of it set of
  labels, $U=\{i_{1}j_{1},\dots,i_{n}j_{n}\}$ is an {\em incompatible set of
  entries}.
  If a set of edges is not incompatible, we say that $\mathbb{V}$ is
  {\em compatible}.
\end{definition} 

We wrote digraphs that encode all unsatisfiable choices. We must decide whether
we have, in fact, the set of all unsatisfiable choices. Again, we will not spell all the compatible, if any, choices. It is an expsize task.

Any chain $Br$ and a compatible choice in $Br$ is an unsatisfiable combination.
After writing all unsatisfiable combinations we must decide whether we wrote all possible combinations (we have an unsatisfiable formula) or not (the pivoted formula is satisfiable).

\begin{definition}\label{PATH:dfn}
Given a chain in $\mathcal{P}$,
\[
\begin{array}{ll}
\mathbf{p}= &c^{0}\stackrel{l_{1}}{\Rightarrow} c^{1}
  \stackrel{l_{2}}{\Rightarrow} c^{2}
  \stackrel{l_{3}}{\Rightarrow}\dots \stackrel{l_{k-1}}{\Rightarrow}
c^{k-1}\stackrel{l_{k}}{\Rightarrow} c^{k}
\end{array}
\]
a {\em compatible choice}\index{compatible choice} in $\mathbf{p}$ is
a set of compatible entries 
$\mathcal{E}=\{{j_{1}i_{1}},\dots,{j_{l}i_{l}}\}$,
so that for each edge $e_{i}=c^{i}\Rightarrow c^{i+1}$, $0\leq i<k$, there is an
entry $ij\in \mathcal{E}$ so that $ij\in label(e_{i})$.
\end{definition}

If one casts a tableau where all combinations of pivots, $\sigma\in\mathbf{2}^{\mathbf{m}}$ labeled as $False$ and there is a sequence
\[
\begin{array}{l}
  \neg a\stackrel{q_{0}}{\Rightarrow}c^{1}\stackrel{q_{1}}{\Rightarrow}
c^{2}\stackrel{q_{2}}{\Rightarrow}\dots \stackrel{q_{k-2}}{\Rightarrow}
c^{k-1}\stackrel{q_{k-1}}{\Rightarrow}\\
  a\stackrel{q_{k}}{\Rightarrow}c^{k+1}\stackrel{q_{k+1}}{\Rightarrow}
  c^{k+2} \stackrel{q_{k+2}}{\Rightarrow}\dots \stackrel{q_{t-2}}{\Rightarrow}
  c^{t-1}\stackrel{q_{t-1}}{\Rightarrow} \neg a=c^{t}
  \end{array}
\]
where
$\{q_{0},q_{1},q_{2},\dots,q_{k-1},q_{k},q_{k+1},\dots,q_{t-1},q_{t}\}\subseteq\sigma$ and each $q_{j}$ belongs to the label of the label $l_{j}$, $0\leq j\leq t-1$,
then the branch that falsifies the elements of $\sigma$ closes. So, the next question is whether all possible sequences appear in $\mathcal{P}$ and, therefore
the pivoted formula is unsatisfiable or not, that is, the pivoted formula is satisfiable.

Given a Pivoted \thsat\ formula $\Psi$, the set of all compatible choices in $\mathcal{P}$ is the set of all unsatisfiable combinations we will find in $\Psi$. We now ask how one can decide whether the set of all compatible choices entails all possible combinations, thus, the Pivoted \thsat\ is unsatisfiable.

Define the set of vertices {\em orthogonal} to the set of chains.

\begin{definition}\label{CHAINANTCHAIN:dfn}
  A set of edges $\tilde{seq}=\{e_{0},\dots,e_{n}\}$ is called an
  {\em antichain}\index{antichain} if for all $seq\in\mathcal{P}$,
  there is an element $e\in\tilde{seq}$ so that $e$ is an edge of $seq$.
\end{definition}

We do not focus on the set of vertices any more. Instead, focus on the set of {\em edges}, or, more specifically, on all the compatible choices of labels in $\mathcal{P}$. If all possible combinations were written, the formula would be unsatisfiable. Otherwise, the formula is satisfiable.
Do all possible choices encompass all the possible combinations?

\begin{definition}[ENTAILS]\label{ENTAILS:dfn}
Let $T(\mathbf{2}^{\mathbf{m}})$ be a set of partial mappings of $\mathbf{m}$
into $\{1,2\}$. We say that $T(\mathbf{2}^{\mathbf{m}})$
entails\index{entails} $\mathbf{2}^{\mathbf{m}}$ if for all  
$\gamma\in\mathbf{2}^{\mathbf{m}}$, there is an 
$\eta\in T(\mathbf{2}^{\mathbf{m}})$ so that $\eta$ is a restriction of
$\gamma$, that is, $\eta$ is a subset of some
$\gamma=\{i_{1}1,i_{2}2,\dots,i_{\mathbf{m}}\mathbf{m}\}$. 
\end{definition}

We search for chains in $\clsddg$. Writing all the paths is a
expsize long task. 
Recall that in Definition \ref{CSDDG:dfn}, $\mathcal{P}$ denotes the
set of all chains in $\clsddg$. 

\begin{definition}\label{CTTCOMB:dfn}
Let $\tau$ be the set of all entries
$\eta=\{1j_{1},\dots,\mathbf{m}j_{\mathbf{m}}\}$ contained in
$\mathbf{2}^{\mathbf{m}}$ so that
\[
\begin{array}{l}
\exists seq\in\mathcal{P}~~
\forall e\in seq~~ \exists 1\leq l\leq \mathbf{m}
(i_{l}j_{l}\in label (e)) 
\end{array}
\]
\end{definition}

Let $\tilde{\tau}$ be the complementary of $\tau$, that is, the set of all
entries $\tilde{\eta}=\{1j_{1},\dots,\mathbf{m}j_{\mathbf{m}}\}$ so that
\[
\begin{array}{l}
\forall seq\in\mathcal{P}~~
\exists e\in seq ~~\forall 1\leq l\leq \mathbf{m}
(i_{l}j_{l}\not\in label (e))
\end{array}
\]

Proposition \ref{SEARCHALL:prp} below marks a decisive step to build
our theoremhood.
After building the closed digraphs, and applying Proposition \ref{SEARCHALL:prp},
our search is focused on looking for compatible antichains. Again, we
emphasize that our answer is a plain output, {\bf YES} or {\bf NO},
without naming the compatible antichains if any.
In other words, we have a plain program and we expect an answer 
{\bf YES} or {\bf NO} regarding the possibility of the existence of some
kind of input.

\begin{proposition}\label{SEARCHALL:prp}
The following assertions are equivalent,
\begin{enumerate}
\item There is no compatible set of entries in $\tilde{\tau}$;
\item $\tau$ entails $\mathbf{2}^{\mathbf{m}}$;
\item There is no compatible antichain in $\mathcal{P}$.
\end{enumerate}
\end{proposition}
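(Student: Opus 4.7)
My plan is to prove the three-way equivalence by a short cycle (1) $\Leftrightarrow$ (2) $\Leftrightarrow$ (3), with the bulk of the content living in (2) $\Leftrightarrow$ (3).

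For (1) $\Leftrightarrow$ (2), I would first observe that the condition defining $\tilde{\tau}$ in Definition \ref{CTTCOMB:dfn} is precisely the logical negation of the one defining $\tau$, so $\tilde{\tau}$ is the set-theoretic complement of $\tau$ inside $\mathbf{2}^{\mathbf{m}}$. Since every element of $\mathbf{2}^{\mathbf{m}}$ is a function and hence a compatible set of entries in the sense of Definition \ref{CPTARRAY:dfn}, (1) amounts to $\tilde{\tau}=\emptyset$. For (2), since every $\gamma\in\mathbf{2}^{\mathbf{m}}$ is trivially a restriction of itself and no strictly larger mapping exists inside $\mathbf{2}^{\mathbf{m}}$, the entailment condition of Definition \ref{ENTAILS:dfn} collapses to $\tau=\mathbf{2}^{\mathbf{m}}$, i.e.\ again $\tilde{\tau}=\emptyset$; both implications then follow.

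For (2) $\Leftrightarrow$ (3), the key construction I would introduce is the assignment $\gamma\mapsto D_{\gamma}$ given by
\[
D_{\gamma}\;:=\;\{\,e\in E(\mathcal{P}) \,:\, label(e)\cap\gamma=\emptyset\,\}.
\]
The first observation is that $D_{\gamma}$ is always compatible: if $ij$ appears in $label(e)$ for some $e\in D_{\gamma}$, then $\gamma(i)\neq j$, and since values live in $\{1,2\}$, this forces $j$ to be the unique ``opposite'' of $\gamma(i)$; hence $\bigcup_{e\in D_{\gamma}} label(e)$ is a partial function, compatible in the sense of Definition \ref{CPT2:dfn}. The second observation is that $D_{\gamma}$ is an antichain iff every chain $seq\in\mathcal{P}$ contains some edge $e$ with $label(e)\cap\gamma=\emptyset$, which by Definition \ref{CTTCOMB:dfn} is exactly the statement that $\gamma$ admits no restriction in $\tau$. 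Combining these two observations gives (3) $\Rightarrow$ (2): if $\gamma$ witnesses $\neg$(2) then $D_{\gamma}$ is a compatible antichain and witnesses $\neg$(3).

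For the reverse (2) $\Rightarrow$ (3), I would start from a compatible antichain $A$ and reverse-engineer a bad valuation. On every position $i$ appearing in $\bigcup_{e\in A} label(e)$ I set $\gamma(i):=3-j$, where $j$ is the unique value attached to $i$ in those labels (well-defined because $A$ is compatible); on the remaining positions $\gamma$ is set arbitrarily. Then for each $e\in A$, every $ij\in label(e)$ satisfies $\gamma(i)=3-j\neq j$, so $e\in D_{\gamma}$; thus $A\subseteq D_{\gamma}$, so $D_{\gamma}$ inherits the antichain property, and by the second observation above $\gamma$ has no restriction in $\tau$, witnessing $\neg$(2).

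The main obstacle I anticipate is not any single deep step but reconciling the mild notational tension between partial and full mappings present in Definitions \ref{ENTAILS:dfn} and \ref{CTTCOMB:dfn}; once that bookkeeping is pinned down, the compatibility of $D_{\gamma}$ (the one genuinely content-bearing observation) follows immediately from $|\{1,2\}|=2$, and the antichain correspondence is merely a restatement of the defining property of $\tau$.
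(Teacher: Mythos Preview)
Your proposal is correct and follows essentially the same route as the paper's proof: the equivalence (1)$\Leftrightarrow$(2) is disposed of by observing that $\tilde{\tau}$ is the complement of $\tau$ in $\mathbf{2}^{\mathbf{m}}$, and the equivalence with (3) is handled via the two contrapositives, building from $\gamma\in\tilde{\tau}$ the set of edges whose labels miss $\gamma$, and from a compatible antichain the ``opposite'' valuation. Your packaging via the single map $\gamma\mapsto D_{\gamma}$ is a bit tidier than the paper's ad hoc constructions (the paper selects one edge per chain rather than all of them), but the underlying ideas are identical.
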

\begin{proof} Clearly $1$ and $2$ are equivalent. If $seq$ is a
   compatible set of entries in $\tilde{\tau}$, then,
   $seq\not\in{\tau}$ for the sets are complementary.

   1 and 2 imply 3. Suppose that there is a compatible antichain in
   $\mathcal{P}$, $Ach=\{e_{1},\dots,e_{k}\}$. Let 
   $lb(Ach)$ be the union of the set of labels of each vertex of $Ach$.
   As $Ach$ is compatible, $lb(Ach)$ has the form
   \[
   \{t_{1}r_{1}, t_{2}r_{2},\dots,t_{l}r_{l}\}
   \]
   where $1\leq t_{1}<t_{2}<...<t_{l}\leq\mathbf{m}$ and $r_{i}$ is either $1$
   or $2$. That is, for all $1\leq n\leq l$ we have only one choice, either
   $r_{n}= 1$ or $r_{n}=2$. So, for all  $1\leq s\leq\mathbf{m}$,
   we can choice a $\tilde{t}_{s}s$ so that $\tilde{t}_{s}\not\in ach$.

   Therefore, there is a non-empty set
   \[
   Comp=\{\tilde{t_{1}}1,\tilde{t_{2}}2,\dots,\tilde{t_{m}}\mathbf{m}\}
   \]
   that is compatible and $Comp\cap Ach=\emptyset$.
   Conclude that $Comp$ belongs to $\tilde{\tau}$ and, thus, 
   there is a compatible array in $\tilde{\tau}$.

   Suppose that 1 is false and
   $s=\{t_{1}r_{1}, t_{2}r_{2},\dots, t_{\mathbf{m}}r_{\mathbf{m}}\}$ is a
   compatible set of entries in $\tilde{\tau}$ and, then, for all sequence
   $s_{r} \in\mathcal{P}$, there is an edge $e_{r}$ so that
   $label(e_{r})\cap s=\emptyset$. Let
   $A=\{e_{r}|e_{r}\in s_{r}~AND~(label(e_{r})\cap s=\emptyset)\}$.
   As $s$ is compatible, $A$ is
   compatible, so, $A$ is compatible
   antichain in $\mathcal{P}$.
\end{proof}

Due to Theorem \ref{SEARCHALL:prp}, we can solve our game by deciding
whether all antichains are incompatible or if there is a
compatible antichain. Bear in mind that the word ``entails $2^{\mathbf{m}}$'' encodes the fact that all the combinations in the pivoted \thsat\ formula are unsatisfiable.
\section{The Search For Compatible Antichains}\label{LASTIDEA:sec}

In this section, we develop the tools we use for deciding whether a closed digraph entails $2^{\mathbf{m}}$. Using Proposition \ref{SEARCHALL:prp}, we search for compatible antichains in the closed digraph.
  
There is no need to search for unsatisfiable combinations. Using Proposition \ref{SEARCHALL:prp}, we 
search compatible antichains, that is, we deal with pivots. Sequences of vertices were used to build the closed digraph. By using Proposition \ref{SEARCHALL:prp}, our attention goes to the question of whether there is a compatible antichain in the closed digraphs and, thus, the set of closed digraphs do not entail $\mathbf{2}^{\mathbf{m}}$. Again, we stress that we do not develop tools to write out all the antichains and, therefore, spell out all the possible compatible combinations which is an expsize problem.
The polysize problem consists of the search for a plain output {\bf YES},
there are antichains or {\bf NO}, there is no antichain.

The set of all chains stores all possible
unsatisfiable combinations one can perform.
We already wrote the closed digraph and, now,
our attention is devoted to the search for compatible antichains, a search we perform in the set of labels.
Each edge of a digraph has a label that will be used to
point out the relevant parts that weigh in our search for antichains.

Given a closed digraph, we first modify the shape of the digraph to
avoid exponential branching, the subject of the first Subsection \ref{SDssc}.
Lastly, we show that the reshaping, the {\em Linearized Digraph} does not change
the set of compatible antichains.

In the second step, Subsection \ref{DCA:ssc}, after modifying the closed digraph
we write a digraph that marks the compatible antichains orthogonal to the chains of the Linearized Digraph. A result of an empty digraph signalizes no compatible antichains in the Linearized Digraphs and, as we show in this Section, no compatible antichain in $\mathcal{P}$.
Proposition \ref{SEARCHALL:prp} below marks a decisive step to build
our theoremhood.
After building the closed digraphs, and applying Proposition \ref{SEARCHALL:prp},
our search is focused on looking for compatible antichains. Again, we
emphasize that our answer is a plain output, {\bf YES} or {\bf NO},
without naming the compatible antichains, if any.
In other words, we have a plain program and we expect an answer 
{\bf YES} or {\bf NO} regarding the possibility of the existence of some
kind of input. 
\subsection{A Simpler Digraph}\label{SDssc}

Here, we model a new shape to the set of closed digraphs.
To avoid an exponential search for the antichains, we build a digraph, the
{\em Linearized Digraph} less complex
than the closed digraph that is a mirror of the antichains in the Expansion of the Closed Digraph. The search for compatible antichains is unchanged
because we show there are compatible antichains in the Linearized Digraph if and only if there are compatible antichains in the closed digraph. We show that the search for compatible antichains over the Linearized Digraph is polynomial in time and space.

In this section, we develop the tools for deciding whether a closed digraph entails $2^{\mathbf{m}}$. Using Proposition \ref{SEARCHALL:prp}, we search for compatible antichains in the closed digraph instead of searching for unsatisfiable combinations. The set of sequences of vertices, played the main role in building closed digraphs. Guided by Proposition \ref{SEARCHALL:prp}, our attention goes to the question of whether the labels contain a compatible antichain and, thus, the set of closed digraphs do not entail $\mathbf{2}^{\mathbf{m}}$. Again, we stress that we do not
develop tools to write out all the antichains and, therefore, spell out all the
possible compatible combinations, which is an expsize problem.
The polysize problem consists of the search for a plain output {\bf YES},
there are antichains or {\bf NO}, there is no antichain.

The set of all chains stores all possible
unsatisfiable combinations.

Given a closed digraph, we first modify the shape of the digraph to
avoid exponential branching, the subject of the first Subsection \ref{SDssc}. We must show that the reshaping, the {\em Linearized Digraph} does not change
the set of compatible antichains.

Our proof strategy, schematized to avoid exponential multiplication on writing the set of chains, is the transformation of closed digraphs into linear digraphs. 

\begin{definition}\label{RT:dfn}
  Given a closed digraph, define,
  \begin{enumerate}
  \item $\mathtt{R}$, the set of roots\index{roots} is  
    $\{a\in V|\nexists b\in V((a\Rightarrow b)\in E)\}$;
 \item $\mathtt{T}$, the set of tops\index{tops} is
    $\{a\in V|\nexists b\in V((b\Rightarrow a)\in E\}$;
 \item If there are at least two edges
    \[
    \begin{array}{l}
    a\Rightarrow b_{1}\\
    a\Rightarrow b_{2}
    \end{array}
    \]
    or
    \[
    \begin{array}{l}
    b_{1}\Rightarrow a\\
    b_{2}\Rightarrow a
    \end{array}
    \]
    where $b_{1}\neq b_{2}$ , then $a$ is a {\em branching}\index{branching}.
  \end{enumerate}
\end{definition}

\begin{definition}\label{MAXBRCHS:dfn}
A sequence of labeled edges in a closed digraph is called a 
{\em maximal branch}\index{maximal branch} if seq is of the form
\begin{eqnarray}\label{ALTSEQMAX:eq}
  c^{0}\stackrel{l_{1}}{\Rightarrow} c^{1}\stackrel{l_{2}}{\Rightarrow} c^{2}
  \stackrel{l_{3}}{\Rightarrow}c^{3}\dots c^{k-1}
  \stackrel{l_{k}}{\Rightarrow}c^{k}
\end{eqnarray}
where $c^{k}$ is either a root or a branching and $c^{0}$ is either a top or a
branching and no other branching belongs to the sequence.

If an interval $[a,b]$ contains no branching besides $a$ and $b$, then it is called a {\em linear interval}\index{linear interval}.

A linear interval $[a,b]$ where $b$ is a root, is called a {\em root interval}\index{root interval}.
\end{definition}

\begin{definition}\label{BRCHS:dfn}
  Use the term {\em branch}\index{\em branch} to design a linear sequence
  \[
  c^{0}\Rightarrow c^{1}\Rightarrow\dots\Rightarrow c^{k-1}\Rightarrow c^{k}
  \]
  where except by, perhaps, $c^{0}$ and $c^{k}$, no other vertex is a branching.
\end{definition}

We will define the operations over the closed digraphs, Lifting, Multiplication of Branches and the Addition of Labels, the operations,
\ref{DIFCTTN:dfn}, \ref{OPERATIONS2:dfn} and \ref{ADDLBL:dfn} and, without any
loss in our search for compatible antichains, the 
given closed digraph is reshaped into a
simpler digraph, the {\em Linearized Digraph}. Of course, we have to show that we do not lose or gain
information about compatible antichains.
Examples of the operations and more explanations can be found in the
institutional page,
\[
www.ime.usp.br/\!\sim\!\! weiss
\]
In addition, Example \ref{EX1:exp} illustrates the use of all operations defined next.

\begin{definition}[Lifting]\label{DIFCTTN:dfn}
  Given a closed digraph $Gr$,
  \begin{itemize}
  \item If $B$ and $R$ are two consecutive linear intervals, that is, they share a common branching, $v$,
\[
\begin{array}{l}
  br_{11}=v_{11}\Rightarrow \dots\Rightarrow v_{1t_{11}}\Rightarrow v\\
  \dots\\
  br_{1k}=v_{1k}\Rightarrow \dots\Rightarrow v_{kt_{1k}}\Rightarrow v
\end{array}
\]
and
\[
\begin{array}{l}
br_{21}=v\Rightarrow v_{21}\Rightarrow \dots\Rightarrow v_{2t_{21}}\\
  \dots\\
br_{2k}=v\Rightarrow v_{2k}\Rightarrow \dots\Rightarrow v_{2t_{2k}}
\end{array}
\]
the Lifting of $Gr$ by $B$ and $R$\index{Lifting by $B$ and $R$} is 
$Gr(B\star R)=\langle V_{B\star R}, E_{B\star R},label\rangle$,
\index{$Gr(B\star R)=\langle E_{B\star R},
  E_{B\star R},label\rangle$}
\[
\begin{array}{ll}
  V_{B\star R}=&
  (V_{G}\setminus\{v\})\cup(\{v\}\times\{1,\dots,k\}\\
  E_{B\star R}=&
       (E_{Gr}\setminus\{v_{1t_{11}}\Rightarrow v,\dots,v_{kt_{ik}}\Rightarrow v,\dots,v\Rightarrow v_{21},\!\dots\!,v\Rightarrow v_{2k}\})\\
    &\cup
  \{v_{1t_{11}}\Rightarrow (v,1),\dots,v_{kt_{1k}}\Rightarrow (v,k),\\
  & \dots,(v,1)\Rightarrow v_{21},\dots,(v,k)\Rightarrow v_{2k}\}
\end{array}
\]
The labels of each $w\Rightarrow(v,j)$ and $(v,j)\Rightarrow w$ are the same labels associated with $w\Rightarrow v$ and $v\Rightarrow w$, respectively.
\item If $R$ is of the form
\[
\begin{array}{l}
br_{1}=v_{1}\Rightarrow \dots\Rightarrow v_{t_{1}}\Rightarrow r\\
  \dots\\
br_{n}=v_{n}\Rightarrow \dots\Rightarrow v_{t_{n}}\Rightarrow r
\end{array}
\]
where $r$ is a root, the Lifting of $Gr$ by $R$\index{Lifting of $Gr$ by $R$} is
$Gr(R)=\langle V_{R},E_{R},label\rangle$\index{$Gr(R)$} given by,
\[
\begin{array}{ll}
V_{R}=&(V\setminus\{r\})\cup(\{r\}\times\{1,\dots,n\})\\
E_{R}=&(E\setminus\{v_{t_{1}}\Rightarrow r,\!\dots\!,v_{t_{n}}\Rightarrow r\})\cup\\
     &\{v_{t_{1}}\Rightarrow (r,1),\dots,v_{t_{n}}\Rightarrow (r,n)\}
\end{array}
\]
The labels of each $v_{t_{j}}\Rightarrow(r,j)$ are the same labels of $v_{t_{j}}\Rightarrow r,$.
\item If $B$ is of the form
\[
\begin{array}{l}
  br_{11}=a\Rightarrow v_{1}\Rightarrow\dots\Rightarrow v_{t_{1}}\\
  \dots\\
  br_{1k}=a\Rightarrow v_{m}\Rightarrow\dots\Rightarrow v_{t_{m}}
\end{array}
\]
where $a$ is a top, the Lifting of $Gr$ by $B$\index{Lifting of $Gr$ by $B$} is
$Gr(B)=\langle V_{B},E_{R},label\rangle$\index{$Gr(B)$} given by,
\[
\begin{array}{ll}
V_{B}=&(V\setminus\{a\})\cup(\{a\}\times\{1,\dots,m\})\\
E_{B}=&(E\setminus\{a\Rightarrow v_{1},\dots,a\Rightarrow v_{m}\})\cup\\
     &\{(a,1)\Rightarrow v_{1},\dots,(a,m)\Rightarrow v_{m}\}
\end{array}
\]
The labels of each $(a,j) \Rightarrow(v_{j})$ are the labels of
$a \Rightarrow v_{j}$ 
\end{itemize}
\end{definition}

\begin{definition}[Multiplication of a branch]\label{OPERATIONS2:dfn}
  Let
  \[
  br=v\Rightarrow v_{1}\Rightarrow\dots\Rightarrow v_{j}\Rightarrow u
  \]
  be a maximal branch contained in a linear interval
  $[v,u]=\langle V,E,label\rangle$,
  $\mathtt{m}=\{1,\dots,m\}$ a finite set and
  $M=\{v_{1},\dots,v_{j}\}\times\{1,\dots,m\}$.

  The {\em Multiplication of the branch $br$}\index{Multiplication of a
  branch $br$} is the the interval $[v,u]'$, where the vertices are given
  by the set $(V\setminus\{v_{1},\dots,v_{j}\})\cup M$ and the edges of
  $V\setminus\{v_{1},\dots,v_{j}\}$ plus the edges in the new sequences,
  \[
  v\Rightarrow (v_{1},i)\Rightarrow\dots\Rightarrow(v_{j},i)\Rightarrow u
  \]
  $1\leq i\leq m$.
\end{definition}

As one cannot form edges among vertices in distinct linear sequences, the definition is consistent.

\begin{definition}\label{UPDOWN:dfn}
  Given a closed digraph and a vertex $a$, define
  $Up_{a}$\index{$Up_{a}$} and $Down_{a}$\index{$Down_{a}$},
  respectively as sets of vertices,
  \[
  \begin{array}{l}
  \{v\in V|\exists n\in\mathbb{N} \exists\{v_{1},v_{2},\dots,v_{n}\}\subseteq V|
  (v\Rightarrow v_{1}\Rightarrow v_{2}\Rightarrow\dots\Rightarrow v_{n}
  \Rightarrow a\}\\
  \{w\in V| \exists n\in\mathbb{N} \exists\{w_{1},w_{2},\dots,w_{n}\}\subseteq V|
  (a\Rightarrow w_{n}\Rightarrow\dots\Rightarrow w_{2}\Rightarrow w_{1}
  \Rightarrow w\}
  \end{array}
  \]

  An edge $v_{1}\Rightarrow v_{2}$ belongs to $Up_{a}$ if $v_{1}$
  belongs to $Up_{a}$. An analogous definition is given to $Down_{a}$, that is,
  $v_{2}$ belongs to $Down_{a}$.
\end{definition}

Next, we present a tool to avoid the creation of new compatible antichains
in the linearized digraph we will build next as a simpler, easier to manage
compared to the closed digraph. Suppose there is no addition of new labels. In that case, the Linearized Digraph we build next as a simpler digraph will present new compatible antichains with no corresponding antichain under a suitable mapping in $\mathcal{P}$.

\begin{definition}[Adding a Label]\label{ADDLBL:dfn}
  Let the set of the branching in a closed digraph be $B=\{a_{1},\dots,a_{t}\}$.
  Consider a set of conjugated pairs of literals,
  $B=\{p_{a_{1}},p_{\neg a_{1}},\dots,p_{a_{t}},p_{\neg a_{t}}\}$ that are all distinct
  from the literals we have been using.
  The {\em addition of the new pair of literals to the set
  of branching}\index{addition of the new pair of literals to the set
  of branching} is the new set of labels, $label'$ defined by,
  \[
  \begin{array}{l}
    label'(e)=label(e)\cup
              \{p_{a_{i}}|e\in Down(a_{i})\}\cup
              \{\neg p_{a_{i}}|e\in Up(a_{i})\}
  \end{array}
  \]
\end{definition}

We do not overload our manuscript, so we drop the prime symbol, $label'$ and
write only $label$,

In Example \ref{EX1:exp}, we emphasize the role of adding a pair of conjugated
literals that work to prevent the creation of a new antichain that never
existed in the originally set $\mathcal{P}$.

To perform Lifting, Multiplication of Branches and Addition of a Label
to obtain a simpler digraph, we must show that there will be no prejudice in counting antichains. We cannot create or erase antichains.

Reshape a closed digraph into a set of linear intervals on Procedure \ref{CSD2LIN:prp}. The mainstay to
sustain our construction is that, after writing Procedure
\ref{CSD2LIN:prp}, we obtain a linear digraph that is equivalent
to the originally closed digraph. We show that our remodeling will not
impair the preexisting antichains or create new antichains.

Lifting, Multiplying Branches and Adding labels in a closed digraph are operations used to generate linear digraphs with no branching and, mainly, no loss of information about compatible antichains stored in $\mathcal{P}$ with no exponential multiplication of branches. The operations are illustrated in
Example \ref{EX1:exp}.

\begin{procedure}[Changing a Closed Digraph into a Linear
  Digraph]\label{CSD2LIN:prp}
  Given a closed digraph  $\clsddg$, and a set of conjugated literals distinct
  of any literals from the set of labels of $\clsddg$,
  $S=\{p_{1},\neg p_{1},\dots, p_{r},\neg p_{r}\}$, where
  $V$ is the cardinality of literals (vertices).
  First, we transform the closed digraph, by Multiplying branches,
  into a digraph whose intervals, from branching to a root, are
  maximal linear sequences.
  After we obtain a digraph whose intervals, from branching to a root, are
  maximal linear sequences, we multiply branches, add suitable labels,
  and perform lifting.
  \begin{enumerate}
  \item Let the roots of the closed digraph be
    $\{e_{1},\dots,e_{n}\}$ and let the set of all root intervals be
   \[
  \begin{array}{l}
    r_{11} \Rightarrow\dots\Rightarrow e_{1}\\
    \vdots\\
    r_{l_{1}1} \Rightarrow\dots\Rightarrow e_{1}\\
    \vdots\\
    r_{1s} \Rightarrow\dots\Rightarrow e_{n}\\
    \vdots\\
    r_{l_{s}} \Rightarrow\dots\Rightarrow e_{n}
  \end{array}
  \]
  lift each maximal branch. Obtain the digraph $Modf$,
  \[
    \begin{array}{l}
    r_{11} \Rightarrow\dots\Rightarrow(e_{1},1)\\
    \vdots\\
    r_{l_{1}1} \Rightarrow\dots\Rightarrow(e_{1},l_{1})\\
    \vdots\\
    r_{1s} \Rightarrow\dots\Rightarrow(e_{r},1)\\
    \vdots\\
    r_{l_{s}s} \Rightarrow\dots\Rightarrow(e_{r},l_{r})
  \end{array}
    \]
  \item Suppose we obtain sets of root intervals of the form,
  \[
    \begin{array}{l}
    u\Rightarrow\dots\Rightarrow v_{1}\\
    \vdots\\
    u\Rightarrow\dots\Rightarrow v_{n}
  \end{array}
    \]
  where $\{v_{1},\dots,v_{n}\}$ are roots in the digraph.
    
  Let all the maximal branches ending on $u$ be,
  \[
    \begin{array}{l}
    w_{1} \Rightarrow\dots\Rightarrow u\\
    \vdots\\
    w_{t} \Rightarrow\dots\Rightarrow u
  \end{array}
    \]
  Multiply the branches by $m=max\{n,t\}$ and lift.
  Add to the set of labels a pair of conjugated literals $p_{u}$ and
  $\neg p_{u}$ never used in our process, to, respectively the edges in
  $Down_{u}$ and $Up_{u}$. Rewrite $Modf$ as,
  \[
    \begin{array}{l}
      w_{1} \Rightarrow\dots\Rightarrow (u,1)\Rightarrow\dots
      \Rightarrow v_{1}\\
    \vdots\\
    w_{m} \Rightarrow\dots\Rightarrow (u,m)\Rightarrow\dots\Rightarrow v_{m}
  \end{array}
    \]
 Proceed until all intervals are lifted.   
\item Finally, we have sets of maximal branches of the form,
    \[
    \begin{array}{l}
     v \Rightarrow\dots\Rightarrow u\Rightarrow\dots\Rightarrow t_{1}\\
    \vdots\\
    v \Rightarrow\dots\Rightarrow u\Rightarrow\dots\Rightarrow t_{o}
  \end{array}
  \]
  where $v$ is a top and $\{t_{1},\dots,t_{o}\}$ are roots. Lift all the branches
  and obtain
   \[
    \begin{array}{l}
     (v,1) \Rightarrow\dots\Rightarrow u\Rightarrow\dots\Rightarrow t_{1}\\
    \vdots\\
    (v,o) \Rightarrow\dots\Rightarrow u\Rightarrow\dots\Rightarrow t_{o}
  \end{array}
  \]
\end{enumerate}
  Call the  {\em Linearized Closed Digraph}\index{Linearized Closed Digraph}
  to the linear digraphs obtained after the successive applications of
  Lifting, Multiplication and Addition of a label Denote the Linearized
  Closed Digraph by $Linclsd$\index{$Linclsd$}.

  Suppose that each linear branch has the cardinality $k$ and write
  each linear branch in  $Linclsd$ as $Br_{i}$ as a short for
  $Br_{i}\times\{i\}$. Define the edges whose arrow is inherited by
  $v\Rightarrow u$.
\end{procedure}

We show that a Linearized Digraph, despite being simpler than the set of chains, preserves compatible antichains in the sense that
we do not create nor erase information about compatible antichains in
Lemma \ref{PROJ:lmm}. Recall that the vertices of either $\mathcal{P}$ and $Linclsd$ are pairs, a vertex in the closed digraph and a number.

\begin{definition}\label{VERTEXID:dfn}
  The mapping \index{$Proj:Linclsd\rightarrow\clsddg$}
  \[
  \begin{array}{ll}
    Proj:&Linclsd\rightarrow\clsddg\\
         &(v,j)\mapsto v
   \end{array}
  \]
  is the projection of vertices of $Linclsd$ onto $\clsddg$.
\end{definition}
    
\begin{definition}\label{PROJECEDGES:dfn}
 Consider the edge $e'=(u'\Rightarrow v')$ in $Linclsd$. Define
 $Proj(e')$\index{$Proj(e')$}  as the edge,
 \[
   Proj(u')\Rightarrow Proj(v')
 \]
\end{definition}

Definitions \ref{PROJECEDGES:dfn} and \ref{P2CLSD2:dfn}
are well posed for if
$u'\Rightarrow v'$ and $u''\Rightarrow v''$ are two edges, respectively in $Linclsd$ and in $\mathcal{P}$, then
$Proj(u')\Rightarrow Proj(v')$ and $Ch(u'')\Rightarrow Ch(v'')$ are, equally, edges in $\clsddg$.

\begin{lemma}\label{BIJ$ProjREL:lmm}
  Let $ach$ be an antichain in the Linearized Digraph. Then, for all
  branching $a$, either $Proj(Ach)\cap Up(a)$ or $Proj(Ach)\cap Down(a)$
  is empty.
\end{lemma}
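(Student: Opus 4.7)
The plan is to argue by contradiction, using exactly the mechanism introduced by the Addition of Labels step in Procedure \ref{CSD2LIN:prp}. The whole point of adding a fresh conjugated pair $p_a,\neg p_a$ at each branching $a$ was to forbid any compatible selection of edges from straddling $a$, so the content of this lemma is essentially that this design works as intended. I take the hypothesis implicitly to mean that $ach$ is a compatible antichain (since the notion of antichain one wants to track in the sequel is always the compatible one, and otherwise the statement is vacuous in content).

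First I would fix a branching $a$ of $\clsddg$ and assume, toward a contradiction, that there exist edges $e_{1},e_{2}\in ach$ with $Proj(e_{1})\in Up_{a}$ and $Proj(e_{2})\in Down_{a}$. Next I would trace what the labels on $e_{1}$ and $e_{2}$ look like by unpeeling the definitions of Lifting (Definition \ref{DIFCTTN:dfn}) and Multiplication of a branch (Definition \ref{OPERATIONS2:dfn}): both operations by definition transfer the label of the original edge to the new edge unchanged. Consequently, after Procedure \ref{CSD2LIN:prp} is applied, an edge $e'$ of $Linclsd$ and its projection $Proj(e')$ in $\clsddg$ carry the same ``old'' entries, together with any new pair $p_{b},\neg p_{b}$ contributed at each branching $b$ through which the path containing $e'$ is routed.

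Applying this to the branching $a$: at the step where $p_{a},\neg p_{a}$ is added, by Definition \ref{ADDLBL:dfn} the literal $\neg p_{a}$ is appended to the label of every edge in $Up_{a}$ and $p_{a}$ is appended to the label of every edge in $Down_{a}$. Hence $label(e_{1})\ni \neg p_{a}$ and $label(e_{2})\ni p_{a}$. Treating $p_{a},\neg p_{a}$ as the conjugated pair of entries $i_{a}1,i_{a}2$ (which is legitimate because it was chosen fresh and distinct from all previously used literals), the union of labels along $ach$ contains both $i_{a}1$ and $i_{a}2$, so by Definition \ref{CPT2:dfn} the set of edges $ach$ is incompatible, contradicting the hypothesis.

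The main obstacle is the bookkeeping: one must verify that the three operations (Lifting, Multiplication, Addition of a label) really do propagate the freshly adjoined pair $p_{a},\neg p_{a}$ to every descendant of the relevant copy of $a$ in the final digraph, and nowhere else. In particular, after $a$ has been multiplied and lifted, the label $\neg p_{a}$ must sit on all and only those edges of $Linclsd$ whose projection lies in $Up_{a}$ (and similarly for $p_{a}$ on $Down_{a}$). This is a straightforward but slightly tedious induction on the order in which the branchings are processed by Procedure \ref{CSD2LIN:prp}, using the fact that each subsequent Lifting only renames endpoints and copies labels verbatim, so the $Up$/$Down$ membership of an edge relative to $a$ is invariant under the remaining steps of the procedure.
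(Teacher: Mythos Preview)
Your argument is correct and follows essentially the same approach as the paper: the key observation is that the Addition of Labels step at a branching $a$ places the conjugated pair $p_a,\neg p_a$ on $Up_a$ and $Down_a$ respectively, so any compatible antichain in the Linearized Digraph cannot contain edges projecting into both sets. The paper's proof is a one-line version of what you wrote, omitting the bookkeeping you spell out about label preservation under Lifting and Multiplication.
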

\begin{proof} Due to the addition of labels, there is no branching $a\in\clsddg$ so that two edgers $e_{1}\in Up(a)$ and $e_{2}\in Down(a)$ belong to a compatible antichain $ach$ in the Linearized Digraph because both are incompatible. 
\end{proof}

\begin{definition}
  The set of all antichains in $\mathcal{P}$ is denoted by
  $\Sigma$\index{$\Sigma$, set of all antichains in $\mathcal{P}$} and the
  set of all antichains in $Linclsd$ is
  denoted by $\Theta$\index{$\Theta$}.
\end{definition}


Have we created or erased antichains? As the {\em relevant antichains} are kept, we search for compatible antichains in $Linclsd$. We
show that there are compatible antichains in $Linclsd$ if and only if there are compatible antichains in $\mathcal{P}$.

Given a branching $a$, recall the definition of a branch in $Up(a)$ is a linear
sequence, in $\mathcal{P}$ of the form,
\[
v_{1} \Rightarrow v_{2}\Rightarrow\dots\Rightarrow a\\
\]

\begin{definition}
  For all set of edges $Z$ in $\mathcal{P}$ and branching $a$, we say that
  $Z$ represents ${Up}(a)$ if for all branching $Br$ in $Up(a)$, there is a
  $z\in Ch(Z)$ so that $z$ belongs to $Br$.
  A symmetric definition applies to ${Down(a)}$, that is, for all branching
  $Br$ in $Down(a)$, there is a $z\in Ch(Z)$ so that $z$ belongs to $Br$.
  
  For all set of edges $Z$ in $\mathcal{P}$, $Z$ involves\index{involves}
  a set of branching $\mu$ if for all branching $a\in\mu$, either $Z$
  represents ${Up}(a)$ or $Z$ represents ${Down(a)}$.
\end{definition}

\begin{proposition}\label{REPRESONE:prp}
  Given an antichain $\sigma$ in $\mathcal{P}$, for all branching $a$,
  $\sigma$ represents ${Up}(a)$ or ${Down(a)}$.
\end{proposition}
\begin{proof}
  Let $\sigma$ be an antichain in $\mathcal{P}$. The antichain
  ${Up}(a)$ or ${Down(a)}$ where ``or'' is not exclusive.

  Suppose otherwise.
  Let the branches of $Up(a)$ and $Down(a)$ as, respectively,
  \[
  \begin{array}{lll}
  u_{11}   &\dots & u_{r1}\\
  \vdots  &\dots &\vdots\\
  u_{1j_{1}}&\dots & u_{rj_{r}}\\
  a&\dots& a
  \end{array}
  \]
  and
  \[
  \begin{array}{lll}
  a& \dots& a\\
  v_{11}   &\dots & v_{t1}\\
  \vdots  &\dots &\vdots\\
  v_{1j_{1}}&\dots & v_{tj_{t}}
  \end{array}
  \]
  Combine the above chains and obtain the chains in $\mathcal{P}$,
\begin{equation}\label{CHAIN2ND:eqn}
  \begin{array}{l@{\hskip 5pt}l@{\hskip 5pt}l@{\hskip 5pt}l@{\hskip 5pt}l@{\hskip 5pt}l@{\hskip 5pt}l}
  u_{11}  &\dots&u_{11}       &\dots & u_{r1}&\dots &u_{r1} \\
  \vdots &\dots&\vdots      &\dots &\vdots&\dots &\vdots\\
  u_{1j_{1}} &\dots&u_{1j_{1}}  &\dots & u_{rj_{r}}&\dots &u_{rj_{r}}\\
  (a,1,1)&\dots&(a,1,t)&\dots&(a,r,1)&\dots&(a,r,1)\\
  v_{11}   &\dots& v_{t1}     &\dots &v_{11}&\dots &  v_{t1}\\
  \vdots  &\dots &\vdots    &\dots &\vdots&\dots&\vdots \\
  v_{1j_{1}}&\dots & v_{tj_{t}} &\dots &v_{1j_{1}}&\dots & v_{tj_{t}}
  \end{array}
 \end{equation}
 

  If the antichain $\sigma$ has no edges in columns, say $r$ and $s$ in,
  respectively $Up(a)$ and $Down(a)$, then the combination of the two columns
  $r\times s$, does not contain any edge of the antichain.

  In conclusion, for any antichain $\sigma$, for any branching $a$,
  either all edges of $Ch(\sigma)$ are a member of one edge in $Up(a)$ or
  all edges of $Ch(\sigma)$ belong to $Down(a)$.
\end{proof}

\begin{proposition}\label{REPRESALL:prp}
  Given an antichain $\sigma$ in $\mathcal{P}$, $\sigma$ {\em involves}
  the set of branching of $\mathcal{P}$.
\end{proposition}
\begin{proof}
  Let $a_{1}<\dots<a_{s}$ be an arbitrary ordering of all branches in the
  closed digraph.

  Use Proposition \ref{REPRESONE:prp} to $a_{1}$. If $\sigma$ represents
  $Up(a_{1})$, obtain a new antichain $\sigma_{1}$ by defining,
  \begin{enumerate}\label{UP1:itm}
  \item Let $Br_{11},\dots, Br_{1s_{1}}$ be the set of branches in
    $Up(a_{1})$, still in the closed digraph. For all $1\leq i\leq s_{1}$,
    choose a $x_{i}\rightarrow y_{i}\in Br_{1i}\cap Ch(\sigma)$.
    Erase any edge whose projection via $Ch$ in $Br_{1i}$ is different from
    $x_{i}\rightarrow y_{i}$ and add to $\sigma_{1}$ all
    $(x_{i},t)\rightarrow (y_{i},t)$ whose projection is $x_{i}\rightarrow y_{i}$.
  \end{enumerate}

  Otherwise, $\sigma$ represents ${Down(a_{1})}$ and proceed similarly.
  
    Obtain a new antichain $\sigma_{1}$. Notice that
    $Ach(\sigma_{1})\subseteq Ach(\sigma)$. Suppose that we obtained
    $\sigma_{i}$, $1\leq i\leq t<r$ and that
    $Ach(\sigma_{t})\subseteq\dots\subseteq
    Ach(\sigma_{1})\subseteq Ach(\sigma)$ and that either,
    \begin{enumerate}\label{UP1:itm}
    \item $Ach(\sigma_{l})\cap Up(a_{1})=\emptyset$ and
      $Ach(\sigma_{l})\cap Down(a_{1})$ has a singleton edge at each branch in
      $Down(a_{1})$ or,
    \item $Ach(\sigma_{l})\cap Up(a_{1})$ has a singleton edge at each
      branch in $Up(a_{1})$
      and $Ach(\sigma_{l})\cap Down(a_{1})=\emptyset$.
  \end{enumerate}
    
    Suppose that $\sigma_{t}$ represents ${Up(a_{t+1})}$.
 \begin{enumerate}\label{UP2:itm}
 \item Let $Br_{t+11},\dots, Br_{t+1s_{t+1}}$ be the branches in
    $Up(a_{t+1})$. For all $1\leq i\leq s_{t+1}$,
    choose a $x_{i}\rightarrow y_{i}\in Br_{t+1i}\cap Ch(\sigma_{t})$.
    Erase any edge whose projection via $Ch$ in $Br_{1t+i}$ is different from
    $x_{i}\rightarrow y_{i}$ and add to $\sigma_{t+1}$ all
    $(x_{i},t)\rightarrow (y_{i},t)$ whose projection is $x_{i}\rightarrow y_{i}$.
  \end{enumerate}
  
 Otherwise, $\sigma_{t}$ represents ${Down(a_{t+1})}$ and a mirror
 reasoning applies.

 As a result, obtain a series of nested antichain that we build (can build
 in several random ways),
 $Ach(\sigma)\supseteq\dots\supseteq Ach(\sigma_{t})$,  where the cardinality
 of $t$ does not exceed the cardinality of the set of branching. Given the
 already chosen ordering, for all branching $a_{i}$, we choose preferentially
 $Up(a_{i})$, if possible, and, if we do not have this choice, we have the
 choice $Down(a_{i})$.
 
  We have that $Ach(\sigma_{t})$ cannot be reduced and, using Proposition
  \ref{REPRESONE:prp} represents any branching in the set of branches, thus,
  $\sigma$, as well as its reduction, involves the set of branching.
\end{proof}

\begin{lemma}\label{PROJ:lmm}
  There is an onto mapping $Eq$ from $\Sigma$ to $\Theta$.
\end{lemma}
\begin{proof}
  The operations of Lifting, Multiplying and Adding Labels are fixed, so,
  after reducing an antichain $\sigma$ to its minimal version $\sigma_{t}$,
  we obtain, in the Linearized Digraph, an antichain $\overline{\sigma_{t}}$,
  the result of the fixed operation over $\sigma_{t}$ in the Linearized Digraph.
  The onto mapping is given by $Eq(\sigma)$ is the expansion of
  $Ch({\sigma_{t}})$.
\end{proof}

In short, our quest is whether there is a compatible antichain in a closed
digraph is equivalent to the quest  whether there is a compatible antichain
in a linearized digraph.
\subsection{Digraph of Compatible Antichains}\label{DCA:ssc}

Now, our attention is devoted to the search for compatible antichains, which we
perform in the set of labels.
Each edge of a digraph has a label that will be used to
point out the relevant parts that weigh in our search for antichains.

In outline, once we have linear digraphs, $k$ linear digraphs, each linear digraph in the column $r$ endowed $l_{r}$ vertices. To decide the existence of compatible antichains in the Linearized Digraph, we filter series of digraphs associated to each vertex $v$. These digraphs, called {\em nested digraph} will mark compatible antichains. 
In the world of Nested Digraphs, the set of labels plays the role of vertices.
To avoid a heavy load, we discard unnecessary elements. Labels are the fundamental elements and, as we advance our search, labels play the role of vertices and the connection if given by compatibility.

Up to now, we no longer need the set of literals, that played the role of vertices in the Linearized Digraph and whose sequences encode unsatisfiable combinations.
Once we cleaned new paths, we pursued for new strategy for solving a card. The maximal chains in the Closed Digraph were used to mark the unsatisfiable combinations and our search is centered on the search
for a compatible antichain. So, we are dealing only with the set of
labels.

It is time to discharge vertices and edges in the Linearized Digraph because
these elements were used to mark unsatisfiable choices and we no longer use
them. We use lighter gadgets ruling out the set of vertices, the set of
literals.

A sequence of linear digraphs of the form,
\[
a\stackrel{l_{1}}{\Rightarrow}b_{1}\stackrel{l_{2}}{\Rightarrow}\dots
\stackrel{l_{t}}{\Rightarrow} b_{t}\stackrel{l_{t+1}}{\Rightarrow}c
\]
corresponds to the sequence
\[
l_{1}\rightarrow l_{2}\rightarrow\dots\rightarrow
l_{t}\rightarrow l_{t+1}
\] 

After modifying the closed digraph
we write a digraph that marks the compatible antichains orthogonal to the chains of the Linearized Digraph. A result of an empty digraph signalizes no compatible antichains in the Linearized Digraphs and, as we show in this Section, no compatible antichain in $\mathcal{P}$.

Let $(r,s)$ be a enumeration of the pair {\em column} and {\em location}
in the array of $k$ columns each with $m_{k}$ elements. Write the pair
edge together with the enumeration. An edge
$(p,i,j)\Rightarrow(q,i,j+1)\Rightarrow(r,i,j+2)$ belongs to column $i$ and
the vertices belong to, respectively, the $j$, $j+1$ and $j+2$ places.

\begin{definition}\label{NEWDIG:dfn}
  Given a Linearized Digraph, define its
  {\em Companion Digraph}\index{Companion Digraph} as the digraph whose
  vertices are the set of indexes associated with each edge of the linearized
  digraph. Define an edge $(l_{1},i,j)\rightarrow (l_{2},i,j+1)$ if there
  is a sequence
  $(p,i,j)\stackrel{l_{1}}{\Rightarrow}(q,i,j+1)
  \stackrel{l_{2}}{\Rightarrow}(r,i,j+2)$ in the linearized digraph.
\end{definition}

If  $(p,i,j)\stackrel{l_{1}}{\Rightarrow}(q,i,j+1)
\stackrel{l_{2}}{\Rightarrow}(r,i,j+2)$
and $(p',i',j')\stackrel{l_{1}}{\Rightarrow}(q',i',j'+1)
\stackrel{l_{2}}{\Rightarrow}(r',i',j'+2)$
we have two distinct edges
$(l_{1},i,j)\rightarrow(l_{2},i,j+1)$ and
$(l_{1},i',j')\rightarrow(l_{2},i',j'+1)$, spite labels are the same.

We will not overload our manuscript. Distinct edges are originated from
distinct columns and places. From now on we omit all extra notation and drop
the pair column location.

Now, set the scenario for building nested digraphs associated with each vertex (a label). We expect that a $k^{th}$-interaction to generate an output {\bf satisfiable}. We set the necessary definitions.

\begin{definition}\label{LENGTH:dfn}
  Let $Lab$ be a Digraph. We say that $Lab$ has dept\index{dept} $k$ if
  the length of all maximal sequences of $Lab$ is $k$.
\end{definition}

\begin{definition}\label{NEWLABEL:dfn}
  Let $Lab$ be a Digraph and $v$ a vertex of $Lab$. An edge-labeled
  digraph for $v$\index{edge-labeled digraph for $v$} is a closed rooted
  digraph whose edges are labeled and whose root is $v$. We label the
  edges with a set of vertices, that is, there is a mapping from the set
  of edges of $Gr(v)$ into sets of vertices.
  \[
  G(v)=\langle V_{v},E_{v},labels_{v}: E_{v}\mapsto \mathcal{P}(V_{v})\rangle
  \]
  The mapping $labels_{v}$ associate to each edge a set of vertices and
  $labels(v)$ is called the {\em label of the vertex $v$}\index{label of the
  vertex $v$}.

  We require, moreover, that all maximal sequences in $G$
  have length $p$.
\end{definition}

Define the basic operations over {\em labeled digraphs}
we perform to obtain a simpler digraph and, without altering the
the state of the closed digraph concerning the question there are compatible
antichains or not.

\begin{definition}\label{UNION:dfn}
Given two labeled digraphs whose root is $v$,
\[
Gr_{v}=\langle V_{v},E_{v},labels {v}:V_{v}\mapsto\mathcal{P}(E_{v})\rangle
\]
and
\[
Gr_{v}'=\langle V_{v}',E_{v}',labels {v}':E_{v}'\mapsto\mathcal{P}(E_{v})\rangle
\]
their union\index{union of digraphs}, $Gr_{v}\cup Gr_{v}'$ is given by
\[
\langle V\cup V',E\cup E',labels''_{v}:E\cup E'\mapsto
\mathcal{P}(V\cup V')\rangle
\]
where $labels''_{v}:E\cup E'\mapsto\mathcal{P}(V\cup V')$ is given by
\[
\begin{array}{l}
labels''_{v}=label(v),~{\rm if} ~v\in E_{v}\setminus E_{v}'\\
labels''_{v}(v)=label'(v),~{\rm if} ~v\in E_{v}'\setminus E_{v}\\
labels''_{v}(v)=label(v)\cup label'(v),~{\rm if}, v\in E_{v}\cap E_{v}'
\end{array}
\]
Their intersection\index{intersection of digraphs, $Gr_{v}\cap Gr'_{v}$}
is given by,
\[
\langle E_{v}\cap E_{v}',E_{v}\cap E_{v}',labels'':E_{v}\cap E'_{v}
\mapsto\mathcal{P}(E_{v}\cap E_{v}')\rangle
\]
where $labels''$ is given by
$labels''(v)=label(v)\cap label'(v)$, if $v\in E_{v}\cap E_{v}'$.
\end{definition}

\begin{definition}[Nested Digraph]\label{NESTEF:dfn}
A rooted edge-labeled digraph,
$Gr_{v}=\langle V_{v},E_{v},labels {v}:E_{v}\mapsto\mathcal{P}(E_{v})\rangle$
is a {\em nested digraph of length $p$}\index{nested digraph of length $p$}
if $Gr_{v}$ is the union of a number $s$ of linear digraphs of a fixed
dept $p\geq 1$, 
\[
v\leftarrow a^{2}_{i}\leftarrow a^{3}_{i}\leftarrow\dots\leftarrow
a^{p}_{i}
\]
so that for each linear digraph, we have
$label(a^{i-1}_{i}\leftarrow a^{i}_{i})=\{a^{p}_{i},a^{p-1}_{i},\dots,a^{i}_{i}\}$,
$1\leq i\leq s$.

Notice that not necessarily vertices at each level $j$, $1\leq j\leq p$
are distinct but, at each distinct level, the vertex are pairwise distinct.

A {\em compatible nested digraph of length $p$}\index{compatible nested
digraph of length $p$} is the union of linear digraphs of length $p$ whose labels are compatible sets.
\end{definition}

\begin{lemma}\label{UNINEST:lmm}
The union of two nested digraphs of depth $p$, $G(v)$ and $G(v)'$
is a nested digraph.
\end{lemma}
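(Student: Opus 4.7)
The plan is to unfold the definition of nested digraph for both arguments, rewrite the set-theoretic union through Definition~\ref{UNION:dfn}, and then check item by item that the three defining clauses of Definition~\ref{NESTEF:dfn} still hold. Write
\[
Gr(v)=\bigcup_{i=1}^{s}\bigl(v\rightarrow a^{2}_{i}\rightarrow a^{3}_{i}\rightarrow\dots\rightarrow a^{k}_{i}\bigr),
\qquad
Gr(v)'=\bigcup_{j=1}^{s'}\bigl(v\rightarrow b^{2}_{j}\rightarrow b^{3}_{j}\rightarrow\dots\rightarrow b^{k}_{j}\bigr),
\]
so that each branch is a linear chain of depth $k$ rooted at the common vertex $v$ and each edge carries the suffix label $\{a^{k}_{i},\dots,a^{\ell}_{i}\}$ (respectively $\{b^{k}_{j},\dots,b^{\ell}_{j}\}$) prescribed by Definition~\ref{NESTEF:dfn}.

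The structural part of the argument is then almost immediate. By Definition~\ref{UNION:dfn}, $Gr(v)\cup Gr(v)'$ has as its vertices and edges the union of the vertices and edges coming from these $s+s'$ branches; since both digraphs share the root $v$ and every branch has exactly depth $k$, the union is once again a union of linear branches of depth $k$ rooted at $v$. The distinctness-at-each-level condition is inherited: at a given level $\ell$ the vertices in $Gr(v)\cup Gr(v)'$ are precisely the union of the vertices at level $\ell$ of the two digraphs, and within each digraph these were already pairwise distinct; any further coincidence simply merges two branches into the same branch in the union, which is still admissible under the nested digraph definition.

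What remains, and where the only real checking lies, is the labeling condition. For an edge appearing in only one of $Gr(v)$, $Gr(v)'$, Definition~\ref{UNION:dfn} leaves its label untouched, so the suffix-labeling convention is trivially preserved. For an edge $e$ that lies in both digraphs, the union rule sets its new label to $label(e)\cup label'(e)$; I will argue that this still has the prescribed shape, since both labels are suffixes of the same branch suffix determined by the common terminal vertex $a^{k}=b^{k}$, so one set is contained in the other and the union collapses to the longer suffix. The main (and only) obstacle is thus to show that coincident edges in the two digraphs must terminate at a common top vertex, so that their two prescribed labels are nested; once this is done, the union operation preserves the labeling scheme of Definition~\ref{NESTEF:dfn} and $Gr(v)\cup Gr(v)'$ qualifies as a nested digraph of depth $k$.
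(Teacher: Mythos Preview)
The paper gives no proof of this lemma; it states it and moves on, treating it as immediate from Definition~\ref{NESTEF:dfn}. Your structural paragraph is exactly the argument the paper has in mind: a nested digraph is \emph{defined} as a union (in the sense of Definition~\ref{UNION:dfn}) of linear labeled chains of depth $k$ with root $v$, so the union of two such objects is again a union of linear labeled chains of depth $k$ with root $v$.

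Where your proposal goes wrong is the final labeling paragraph. You assert that if an edge $e$ lies in both $Gr(v)$ and $Gr(v)'$ then the two branches containing it must share the same top vertex $a^{k}=b^{k}$, so that the two suffix labels are nested and their union collapses. This is not true in general: two chains can share an edge at some intermediate level $\ell$ (hence share the vertices at levels $\ell-1$ and $\ell$) and still diverge at levels $\ell+1,\dots,k$. In that case $label(e)$ and $label'(e)$ are genuinely incomparable sets, and their union is not a single suffix.

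The repair is not to salvage that claim but to notice that it is unnecessary. Already \emph{inside} a single nested digraph, distinct chains may share an edge, and the label attached to that edge is, by construction via Definition~\ref{UNION:dfn}, the union of the suffix labels coming from every chain through it. Thus the labeling condition in Definition~\ref{NESTEF:dfn} is really ``the label of an edge is the union, over all constituent chains containing it, of the corresponding suffix sets.'' With that reading, the lemma is literally associativity of the union in Definition~\ref{UNION:dfn}: writing $Gr(v)=\bigcup_{i} L_{i}$ and $Gr(v)'=\bigcup_{j} L'_{j}$ as unions of labeled linear chains, $Gr(v)\cup Gr(v)'=\bigcup_{i} L_{i}\cup\bigcup_{j} L'_{j}$ is again such a union, and the edge labels are automatically correct. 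No separate analysis of coincident edges is needed.
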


The intersection of two nested digraphs of depth $p$ is not necessarily a
nested digraph but we can select {\em the} maximal nested digraph contained in
the intersection. Uniqueness follows from the fact that, given two nested
maximal digraphs, $G(v)$ and $G(v)'$, then the union of a nested digraph is nested and, therefore, we cannot have
two distinct maximal nested digraphs contained in $G(v)\cap G(v)'$.

The maximal nested digraph contained in the intersection of two nested
digraphs,  $G(v)$ and $G(v)'$, contains the set of all compatible
antichains contained in both $G(v)$ and $G(v)'$.

\begin{definition}\label{MAX:dfn}
  Let $G(v)$ and $G(v)'$ be two nested digraphs of depth $p$.
  Denote by $maxG(v)$\index{$maxG(v)$}
  the Maximal Nested Subdigraph contained in $G(v)\cap G(v)'$.
\end{definition}

We will write an algorithm, Pseudocode \ref{INTER:sch} to obtain $maxG(v)$, the maximal nested digraph contained in the intersection of two nested digraphs. 

\begin{pseudocode}\label{INTER:sch}
  %
  Given a digraph $G(v)=\langle V,E,label\rangle$, intersection of two
  nested digraphs, obtain, using the two pesudocodes above, $maxG(v)$.

  We denote the vertices of the intersection by $v^{m}_{i}$, where $m$ is a
  level and $i$ is the position of the vertex in the level $m$ and $v=v^{0}_{1}$.

  {\bf First Pseudocode:} From $m:=0,~step~1,~to~k-2$, do:\\
  For all edges
  $u^{m}_{i}\leftarrow v^{m+1}_{j}$, consider the vertices that form an ascending
  sequence of length $2$ with $u^{m}_{i}\leftarrow v^{m+1}_{j}$, say,  the set
  $\{w_{1}^{m+2},\dots,w_{l}^{m+2}\}$.
  For $s$ ranging from $1$ to $l$, consider as provisional labels,
  \[
  lbl_{i,j,s}(w_{s}^{m+2}\rightarrow v^{m+1}_{j})=
  label(w^{m+2}_{s}\rightarrow v_{j}^{m+1})\cap
  label(u^{m+1}_{j}\rightarrow v^{m}_{i})
  \]
  Define,
  \[
  label'(u^{m+1}_{j}\rightarrow v^{m}_{i})=
  \cup_{s}lbl_{i,j,s}(w_{l}^{m+2}\rightarrow v^{m+1}_{j})\cup\{v^{m+1}_{j}\}
  \]

  Once  we defined all labels, $label'(u^{m+1}_{j}\rightarrow v^{m}_{i})$, define
  \[
  label(w^{m+2}_{s}\rightarrow v_{j}^{m+1})=
  \cup_{i}lbl_{i,j,s}(w_{s}^{m+2}\rightarrow v^{m+1}_{j})
  \]  
  and, finally, $label(u^{m+1}_{j}\rightarrow v^{m}_{i})
  \gets label'(u^{m+1}_{j}\rightarrow v^{m}_{i})$.

  The final result of applying the first pseudocode is the union of linear
  digraphs ordered by $\supseteq$.

  {\bf Second Pseudocode:} We search
  backwards, for $m:=k~STEP~ -1~UNTIL~ 2$.

  For all $v^{m}_{i}$, let $L_{mi}$ be the set of all edges
  \[
  \{v^{m-1}_{j}|(u^{m}_{i}\rightarrow v^{m-1}_{j})~ so~ that~
  u^{m}_{i}\in label(u^{m}_{i}\rightarrow v^{m-1}_{j})
  \]
  For all $v^{m-1}_{j}\in L_{mi}$, define $G_{u^{m}_{i}v^{m-1}_{j}}$ as the union of
  all linear sequences
  \[
  w^{k}_{r_{k}}\rightarrow w^{k-1}_{r_{k-1}}\rightarrow\dots\rightarrow
  u^{m}_{i}\rightarrow v^{m-1}_{j}\rightarrow\dots\rightarrow v
  \]
  where $\{u^{m}_{i},v^{m-1}_{j}\}$ is contained in all labels of degree
  lower than $m-1$.
  
  For all $e$ in $Gr_{u^{m}_{i}v^{m-1}_{j}}$, define,
  \[
  label(e)=(label(e)\setminus\{u^{m}_{p},v^{m-1}_{q}|p\neq i~AND~q\neq j\}
  \]

  Define $G_{m,m-1}(v)$ as
   $\cup_{1\leq i\leq l_{m}}(\cup_{v^{m-1}_{j}\in L_{mi}}Gr_{u^{m}_{i}v^{m-1}_{j}})$

  Go to let $G\gets G_{m,m-1}(v)$ and perform step $m-1$. Repeat until level $2$.

  In the second procedure, we rule out all edges whose labels do not
  comply with the rules of nested digraphs. 
\end{pseudocode}

We show that we obtain a nested digraph that is the maximal nested digraph
contained in the intersection of two nested digraphs that share the same root.

\begin{proposition}[$maxG(v)$]\label{MAXINT:prp}
  Using Pseudocodes \ref{INTER:sch}, given two nested digraphs, $G(v)'$ and
  $G(v)''$, obtain $maxG(v)$.
\end{proposition}
\begin{proof}
  Define,
  \[
  \begin{array}{l}
  lbl_{i,j,s}(w_{s}^{m+2}\rightarrow v^{m+1}_{j})=
  label(w^{m+2}_{s}\rightarrow v_{j}^{m+1})\cap
  label(u^{m+1}_{j}\rightarrow v^{m}_{i})\\
   label(w^{m+2}_{s}\rightarrow v_{j}^{m+1})=
  \cup_{i}lbl_{i,j,m}(w_{s}^{m+2}\rightarrow v^{m+1}_{j})\\
  label(u^{m+1}_{j}\rightarrow v^{m}_{i})=
  \cup_{l}lbl(w_{l}^{m+2}\rightarrow v^{m+1}_{j})\cup\{v^{m+1}_{j}\}
  \end{array}
  \]

  Deduce from the previous sentences that
  $label(u^{m+1}_{j}\rightarrow v^{m}_{i})$ is
  the union, ranging over all $s$, of
  $lbl(w_{l}^{m+2}\rightarrow v^{m+1}_{j})$
  and $label(w^{m+2}_{s}\rightarrow v_{j}^{m+1})$ is the union of all temporally
  labels $lbl(w_{l}^{m+2}\rightarrow v^{m+1}_{j})$ and, thus, we obtain, using
  the first algorithm, the union of linear digraphs whose
  labels are ordered by $\subseteq$. 

  Using the second pseudocode, each $G_{u^{m}_{i}v^{m-1}_{j}}$ is built after
  a selection of all linear chains so that $\{u^{m}_{i},v^{m-1}_{j}\}$ is
  contained in the label of all edges of dept lower than $m-1$.
  $G_{u^{m}_{i}v^{m-1}_{j}}$ is modified by ruling out all labels
  
  \[
  \{w^{r}_{s}|(r=m~IMPLIES~s\neq i)~AND~(r=m-1~IMPLIES~s\neq j\}
  \]
  In this way, we keep only the linear arrays of depth $k$ so that it its
  edges of length less than $m$ contain $\{u^{m}_{i},v^{m-1}_{j}\}$ in their label,
  and, we rule out sequences that are not maximal and, thus, fulfill nested
  digraphs requirements.

  Define $G(v)$ as the union of all $G_{u^{m}_{i}v^{m-1}_{j}}$. Conclude that, as
  we work backward, from $k=m$ step $-1$ that we obtain a nested digraph.
\end{proof}

The search for $maxG(v)$ is polynomially bounded, as we show in the analysis of computational boundaries in
Proposition \ref{MAXNSTD:prp}.

After writing the {\em Linearized Digraph}, we filter all
compatible antichains. We write a nested compatible digraph
associated with each vertex $v$.

These digraphs associated with root $v_{j}^{m}$ are recursively created
in iteration with each vertex $v_{i}^{n}$, for $1$ to $m-1$, for
$1\leq i\leq k_{m}$.
A non-empty final result generates an output {\em there are compatible antichains} and, otherwise,
{\em there are no compatible antichains}.
Write a nested digraph associated with each {\bf edge} of this digraph from
the Linearized Digraph. Edges will play the role of
vertices and the arrows connecting the vertices are labeled. We will obtain
labeled digraphs whose labels are the pointers of compatible antichains.

The search for $maxG(v)$ is polynomially bounded, as we show in the analysis of computational boundaries in
Proposition \ref{MAXNSTD:prp}.

\begin{procedure}[Linear Digraphs]\label{LIN:prc}
Given a linearized digraph endowed with $p$ branches,
$Br=\{B_{1},\dots,B_{k}\}$, for all $1< i\leq l$, we will write
a set of nested digraphs for each vertex $e_{j}^{t}$, $1< t\leq k$ and
$1\leq j\leq l_{t}$.
The $k^{th}$ step generates either: \\Empty digraphs and an output,
{\bf there are no compatible antichains} or\\
Non empty digraphs  and an output,
{\bf there are compatible antichains}.

{\bf Step 1:} For all $e^{1}_{i}$ in the first branch, $Br_{1}$, for
all $e^{r}_{j}$, $2\leq r\leq l$, if $e^{1}_{i}$ and $e_{r}^{j}$ are compatible, write the digraph,
\[
G(e^{r}_{j},e^{1}_{i})=\langle\{e^{1}_{i},e^{r}_{j}\},\{e^{r}_{j}\rightarrow e^{1}_{i}\},
(label^{1})^{r}_{j}(e^{r}_{j}\rightarrow e^{1}_{j})= \{e^{1}_{i}\}\rangle
\]
Define $G(e^{r}_{j})$ as the union of all $G(e^{r}_{j},e^{1i})$, $1\leq i\leq k_{1}$.

All vertices in the columns $i$, $2\leq i\leq k$ are form an edge with the
compatible vertices in the first column.

{\bf Step $\mathbf{r}$:} Suppose we performed all steps from 1 to $r-1$ and
obtained nested digraphs.

Let
$G(v^{s}_{j})=\langle V_{v^{s}_{j}},E_{v^{s}_{j}},label_{v^{s}_{j}}\rangle$
be nested rooted digraph of depth $r-1$, associated to the
vertices $v^{s}_{j}$, $r\leq s\leq k$, $1\leq j\leq l_{s}$. 

For all $G(v^{r}_{i})$, $1\leq i\leq l_{r}$, for all $v^{s}_{j}$, $s>r$,
$1\leq j\leq k_{s}$, if $v^{r}_{i}$, and $v^{s}_{j}$ are
compatible, replace the (root) vertex $v^{r}_{i}$ in $G(v^{r}_{i})$ by $v^{s}_{j}$.
Obtain the digraph $G(v^{r}_{i}/v^{s}_{j})$.
\begin{enumerate}
\item Let $G(v^{s}_{j},v^{r}_{i})'''$ be the intersection of
  $G(v^{r}_{i}/v^{s}_{j})$ and $G(v^{s}_{j})$;
\item Let $maxG(v^{s}_{j},v^{r}_{i})=
  \langle (V''^{s}_{j})^{r}_{i},(E''^{s}_{j})^{r}_{i},
  (label'^{s}_{j})^{r}_{i}\rangle$ be the maximal nested digraph contained in
  $G(v^{s}_{j},v^{r}_{i})'''$;
\item Define $G(v^{s}_{j},v^{r}_{i})'=\langle (V'^{s}_{j})^{r}_{i},(E'^{s}_{j})^{r}_{i},
  label_{0}^{sjri}\rangle$ as
\[
\begin{array}{ll}
(V'^{s}_{j})^{r}_{i}=  & (V''^{s}_{j})^{r}_{i}\cup\{v^{r}_{i}\}\\
(E'^{s}_{j})^{r}_{i}=  &  (E''^{s}_{j})^{r}_{i}\cup\{v\leftarrow v^{r}_{i}|
v~{\rm belongs~to~level}~r-1\}\\
  (label^{s}_{j})^{r}_{i}(e)=  & label'v^{s}_{j}(e)\cup\{v^{r}_{i}\}\\
  (label^{s}_{j})^{r}_{i}(v^{r}_{i}\rightarrow v)=  & \{v^{r}_{i}\}
\end{array}
\]
\end{enumerate}

Define $G(v^{s}_{j})$ as the union of all $G(v^{s}_{j},v^{r}_{i})'$.

Call the {\em set of nested antichains}\index{set of nested antichains}
to the set of digraphs obtained at each vertex of level $p$.
\end{procedure}

\begin{theorem}\label{ANTICHAIN:thm}
  For all vertex $v$, the digraph $G(v)$, \ref{LIN:prc}, is a
  compatible nested digraph.
\end{theorem}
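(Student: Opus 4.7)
The plan is to prove the theorem by induction on the step $r$ of Procedure \ref{LIN:prc}, maintaining the following invariant: after Step $r$, for every vertex $v^{s}_{j}$ with $s \geq r$, the digraph $G(v^{s}_{j})$ is a compatible nested digraph of depth $r$ rooted at $v^{s}_{j}$, and the label on each edge along any maximal linear sub-digraph is exactly the set of vertices lying strictly above that edge on the same sub-digraph (as required by Definition \ref{NESTEF:dfn}).

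For the base case, Step 1 produces digraphs of the form $G(e^{r}_{j}, e^{1}_{i}) = \langle \{e^{1}_{i}, e^{r}_{j}\}, \{e^{r}_{j} \rightarrow e^{1}_{i}\}, \{e^{r}_{j} \rightarrow e^{1}_{i}\} \mapsto \{e^{1}_{i}\}\rangle$ only when $e^{1}_{i}$ and $e^{r}_{j}$ are compatible. Each such single-edge digraph is trivially nested of depth $1$, its unique label $\{e^{1}_{i}\}$ coincides with the set of vertices above the edge, and by Lemma \ref{UNINEST:lmm} the union $G(e^{r}_{j})$ over all compatible $e^{1}_{i}$ is again a nested digraph. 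Compatibility at this level is automatic from the pairwise compatibility check.

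For the inductive step, I assume the invariant up to $r-1$. At Step $r$, for each compatible pair $(v^{r}_{i}, v^{s}_{j})$, we form $G(v^{r}_{i}/v^{s}_{j})$ by relabeling the root of $G(v^{r}_{i})$, take its intersection with $G(v^{s}_{j})$, and extract the maximal nested subdigraph $maxG(v^{s}_{j}, v^{r}_{i})$ using Proposition \ref{MAXINT:prp}. By construction (and by the very definition of $maxGr$), this maximal subdigraph is nested of depth $r-1$, with labels still satisfying the ``vertices strictly above'' property, and inherits compatibility because the intersection of compatible label sets is compatible. We then attach $v^{r}_{i}$ as the new root via edges $v^{r}_{i} \rightarrow v$ for every vertex $v$ at the old top level $r-1$, assigning label $\{v^{r}_{i}\}$ to each new edge and adding $v^{r}_{i}$ to every pre-existing label. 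The result $G(v^{s}_{j}, v^{r}_{i})'$ is by construction a nested digraph of depth $r$: the attachment simply adds one new top layer whose edges carry the root in their label, and the augmented labels of the inherited edges remain exactly the list of vertices above them once $v^{r}_{i}$ is included. Compatibility is preserved because we only attach when $(v^{r}_{i}, v^{s}_{j})$ is compatible and we intersected against $G(v^{s}_{j})$, whose paths are already compatible. Finally, $G(v^{s}_{j})$ is the union over $i$ of the $G(v^{s}_{j}, v^{r}_{i})'$; Lemma \ref{UNINEST:lmm} guarantees this union is nested, and compatibility is preserved per linear sub-digraph, which is all Definition \ref{NESTEF:dfn} demands of a compatible nested digraph.

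The main obstacle I expect is verifying cleanly that the label-rewriting in item (3) of Step $r$ —replacing $label''_{v^{s}_{j}}(e)$ by $label''_{v^{s}_{j}}(e) \cup \{v^{r}_{i}\}$ and setting the labels on the newly attached edges to $\{v^{r}_{i}\}$— produces labels matching exactly the nesting convention, i.e.\ that the label of each edge equals the set of vertices strictly above it in the chosen linear sub-digraph. This amounts to a careful bookkeeping argument: each edge inherited from $maxG(v^{s}_{j}, v^{r}_{i})$ lies strictly below $v^{r}_{i}$ in the new depth-$r$ chain, so adjoining $v^{r}_{i}$ is precisely the right correction, and the new edge $v^{r}_{i} \rightarrow v$ has no ancestor other than the root itself, explaining why its label is $\{v^{r}_{i}\}$. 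Once this invariant is checked by a direct inspection of the three cases in the label-definition of item (3), the induction closes and compatibility propagates automatically.
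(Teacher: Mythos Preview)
Your proposal is correct and follows essentially the same inductive structure as the paper's own proof: base case at Step~1, inductive step via the maximal nested subdigraph of the intersection, re-rooting with the added vertex $v^{r}_{i}$, and closing with the union via Lemma~\ref{UNINEST:lmm}. Your write-up is in fact more careful than the paper's, which dispatches the label bookkeeping you flag as the ``main obstacle'' with a bare ``by construction''; the explicit invariant you carry (labels equal the set of vertices strictly above the edge) is exactly what is needed and is only implicit in the original.
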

\begin{proof}
  In the first step, the sequences have the form
  $e\rightarrow v^{1}_{i}$, if both edges are compatible. The label associated to
  the sequences $e\rightarrow v^{1}_{i}$ is $v^{1}_{i}$. The union of such
  digraphs is
  a compatible nested digraph.

  At the step $r$, we replace the vertex $G(v_{ri})$ by $v_{sj}$,
  intersect $G(v^{r}_{i}/v^{s}_{j})$ and $G(v^{s}_{j})$. By construction,
  $max(G(v^{r}_{i}/v^{s}_{j})\cap G(v^{s}_{j}))$ is the biggest compatible
  nested digraph contained in $G(v^{r}_{i}/v^{s}_{j})\cap G(v^{s}_{j})$.

  The digraph
  $G(v^{s}_{j},v^{r}_{i})'$ was obtained by adding the vertex $v^{r}_{i}$,
  compatible
  with all vertices in $G(v^{r}_{i}/v^{s}_{j})\cap G(v^{s}_{j})$.

  Lastly, the union all $G(v^{s}_{j},v^{r}_{i})$ is compatible because it is the
  union of compatible digraphs.
\end{proof}

Finally, highlight that we reached our claim. We show that the nested
digraphs, if non-empty, contain all compatible antichains.

\begin{definition}\label{MAXNESTPTH:dfn}
  Call a {\em maximal nested path over a nested digraph}\index{maximal
  nested path over a nested digraph} to any maximal path
  \[
   v^{k}\rightarrow v^{k-1}\rightarrow\dots \rightarrow v^{2}\rightarrow v
  \]
  so that for all $1\leq i\leq k-2$,
  $label(v^{i},v^{i+1})\subseteq label(v^{i+1},v^{i+2})$.
\end{definition}

Identify the set of all maximal nested paths over a nested digraph with the set
of maximal compatible antichains. Conclude,

\begin{theorem}\label{ANTICHAINLIN:thm}
  A pivoted \thsat\ $\Psi$ is unsatisfiable if and only if the $k^{th}$
  iteration of nested antichains is empty.
\end{theorem}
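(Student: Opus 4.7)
The plan is to chain together the equivalences established throughout the paper. Unwinding the conclusion: by Definition \ref{FACT:dfn}, $\Psi$ is unsatisfiable precisely when, for every $\sigma\in\mathbf{2}^{\mathbf{m}}$, the \twsat\ formula obtained from the entries picked by $\sigma$ is unsatisfiable. By Proposition \ref{PAPA:prp}, unsatisfiability of such a \twsat\ formula is witnessed by a pair of sequences of the form (\ref{UNS:eqn}), which in the cylindrical digraph are exactly the chains in $\mathcal{P}$; and the set $\tau$ of Definition \ref{CTTCOMB:dfn} records precisely the compatible choices of entries that label such chains. Hence $\Psi$ is unsatisfiable iff $\tau$ entails $\mathbf{2}^{\mathbf{m}}$.

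Next I would invoke Proposition \ref{SEARCHALL:prp}, which converts the statement ``$\tau$ entails $\mathbf{2}^{\mathbf{m}}$'' into the statement ``there is no compatible antichain in $\mathcal{P}$''. Then, applying Theorem \ref{BIJREL:thm}, the compatible antichains in $\mathcal{P}$ are in (partitioned) bijection with the compatible antichains in the Linearized Digraph $Linclsd$; in particular one side is empty iff the other is.

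It then remains to identify compatible antichains in $Linclsd$ with non-empty nested digraphs. For this I would appeal to Procedure \ref{LIN:prc} and Theorem \ref{ANTICHAIN:thm}, which construct, at the top level $k$, rooted edge-labeled digraphs $G(v)$ that are compatible nested digraphs. The key bridge, which I would spell out as a short lemma, is that maximal nested paths (Definition \ref{MAXNESTPTH:dfn}) in $\bigcup_v G(v)$ are in bijection with compatible antichains in $Linclsd$: a maximal nested path $v^{k}\to v^{k-1}\to\dots\to v^{2}\to v$ gives a set of $k$ pairwise compatible edges, one per column of the Linearized Digraph, i.e., a compatible antichain; conversely, any compatible antichain, read column by column from highest level down, produces such a maximal nested path by the compatibility/label-absorption used when intersecting with $maxGr(v)$ in step $r$ of Procedure \ref{LIN:prc}. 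Consequently, the set of nested antichains is empty iff $Linclsd$ has no compatible antichain.

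Concatenating the four equivalences yields the theorem. The only non-trivial step is the last one: checking that the inductive extension in Procedure \ref{LIN:prc} neither loses nor adds maximal nested paths, i.e., that $maxGr(v)$ exactly records the edges at lower levels that remain simultaneously compatible with the two root choices $v^{r}_{i}$ and $v^{s}_{j}$. This is essentially the content of Theorem \ref{ANTICHAIN:thm} together with Proposition \ref{MAXINT:prp}, so I would present the theorem as a formal consequence of those results and the chain of equivalences above.
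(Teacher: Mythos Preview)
Your proposal is correct and follows essentially the same approach as the paper: the paper offers no formal proof, only the one-line remark ``Identify the set of all maximal nested paths over a nested digraph with the set of maximal compatible antichains. Conclude,'' which is precisely the final link in your chain of equivalences. If anything, your write-up is more explicit than the paper's, since you spell out the full chain through Definition~\ref{FACT:dfn}, Proposition~\ref{PAPA:prp}, Proposition~\ref{SEARCHALL:prp}, Theorem~\ref{BIJREL:thm}, and Procedure~\ref{LIN:prc}/Theorem~\ref{ANTICHAIN:thm}, whereas the paper leaves these connections implicit.
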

\section{A Pivoted Strong Version of a \thsat\ Formula}\label{THPVTH}

Given a \thsat\ formula $\Psi$, we show that there are at least one formula $\Psi_{T}$ so that if $\Psi$ is unsatisfiable, so is $\Psi_{T}$.

Arbitrarily choose pairs of literals
$S^{0}=\{p_{01},\neg p_{01},\dots,p_{0k},\neg p_{0k}\}$ in $\Psi$ and factorize $\Psi$,
\[
  (p_{01}\vee S_{p_{01}}) \wedge (\neg p_{01}\vee S_{\neg p_{01}}) 
  \wedge \dots \wedge
  (p_{0k}\vee S_{p_{0k}}) \wedge (\neg p_{0k}\vee S_{\neg p_{0k}}) \wedge
  S_{3}^{1}
  \]
  so that none of the literals of $S^{0}$ appears in any formula in the set of
  formulas
  $\{S_{p_{01}},S_{\neg p_{01}},\dots,S_{p_{0k}},S_{\neg p_{0k}},S_{3}^{1}\}$.

Successively, do the partitions:

\[
\begin{array}{ll}
S_{3}^{1}\equiv\!\!\! &\!\!\! (p_{11}\vee S_{p_{11}})\wedge
  (\neg p_{11}\vee S_{\neg p_{11}}) 
  \wedge \dots \wedge (p_{1k_{1}}\vee S_{p_{1k_{1}}}) \wedge 
  (\neg p_{1k_{1}}\vee S_{\neg p_{1k_{1}}}) \\
  & \wedge S_{3}^{2}\\
S_{3}^{2}\equiv\!\!\! &\!\!\! (p_{21}\vee S_{p_{21}}) \wedge 
  (\neg p_{21}\vee S_{\neg p_{21}}) \wedge \dots \wedge
  (p_{2k_{2}}\vee S_{p_{2k_{2}}}) \wedge 
  (\neg p_{2k_{2}}\vee S_{\neg p_{2k_{2}}}) \\
  & \wedge S_{3}^{3}\\
\dots\\
S_{3}^{h}\!\! \equiv &\!\!\! (p_{h1}\vee S_{p_{h1}}) \wedge 
  (\neg p_{h1}\vee S_{\neg p_{h1}}) \wedge \!\!\dots\!\! \wedge
  (p_{hk_{h}}\vee S_{p_{hk_{h}}}) \wedge 
  (\neg p_{hk_{h}}\vee S_{\neg p_{hk_{h}}})\\
  & \wedge S_{\top} \\
\end{array}
\]
where no literal in the set of conjugated pairs
$\{p_{r1},\neg p_{r1},\dots,p_{rk_{r}},\neg p_{rk_{r}}\}$ belongs to the set of literals of $S_{3}^{s}$, for any $s>r$
or the set of literals of any
$\cup\{S_{p_{rt}}\cup S_{\neg p_{rt}}|1\leq r\leq k_{r}\}$. Moreover
$S_{\top}$ is a $\twsat$ formula.

Finally, obtain the below partition:
\[
  \begin{array}{c}
    \begin{array}{cccc}
    \rnode{11}{p_{01}} & \rnode{12}{\psframebox{S_{p_{01}}}} & 
    \rnode{13}{\neg p_{01}} & \rnode{14}{\psframebox{S_{\neg p_{01}}}}\\[.1cm] 
    \rnode{21}{p_{02}} & \rnode{22}{\psframebox{S_{p_{02}}}} & 
    \rnode{23}{\neg p_{02}} & \rnode{24}{\psframebox{S_{\neg p_{02}}}}\\[.5cm]
    \rnode{31}{p_{0k}} & \rnode{32}{\psframebox{S_{p_{0k}}}} & 
       \rnode{33}{\neg p_{0k}} & \rnode{34}{\psframebox{S_{\neg p_{0k}}}}\\
    \ncline[nodesep=3pt,linestyle=dotted]{-}{22}{32}
    \ncline[nodesep=3pt,linestyle=dotted]{-}{24}{34}
    \end{array}
    \\[-.5cm]
    \begin{array}{c}
    \rnode{4c}{\psframebox{
  \begin{array}{c}
    \begin{array}{cccc}
    \rnode{11}{p_{11}} & \rnode{12}{\psframebox{S_{p_{11}}}} & 
    \rnode{13}{\neg p_{11}} & \rnode{14}{\psframebox{S_{\neg p_{11}}}}\\[.1cm]
    \rnode{21}{p_{21}} & \rnode{22}{\psframebox{S_{p_{21}}}} & 
    \rnode{23}{\neg p_{21}} & \rnode{24}{\psframebox{S_{\neg p_{21}}}}\\ [.5cm]
    \rnode{31}{p_{1k_{1}}} & \rnode{32}{\psframebox{S_{p_{1k_{1}}}}} & 
    \rnode{33}{\neg p_{1k_{1}}} &\rnode{34}{\psframebox{S_{\neg p_{1k_{1}}}}}\\
    \ncline[nodesep=3pt,linestyle=dotted]{-}{22}{32}
    \ncline[nodesep=3pt,linestyle=dotted]{-}{24}{34}
    \end{array}
    \\[-.5cm]
    \vdots \\
    \begin{array}{c}
    \rnode{5c}{\psframebox{
\begin{array}{c}    \begin{array}{cccc}
    \rnode{11}{p_{h1}} & \rnode{12}{\psframebox{S_{p_{h1}}}} & 
    \rnode{13}{\neg p_{h1}} & \rnode{14}{\psframebox{S_{\neg p_{h1}}}}\\[.1cm] 
    \rnode{21}{p_{h2}} & \rnode{22}{\psframebox{S_{p_{h2}}}} & 
    \rnode{23}{\neg p_{h2}} & \rnode{24}{\psframebox{S_{\neg p_{h2}}}}\\[.5cm] 
    \rnode{31}{p_{hk_{h}}} & \rnode{32}{\psframebox{S_{p_{hk_{h}}}}} & 
  \rnode{33}{\neg p_{hk_{h}}} & \rnode{34}{\psframebox{S_{\neg p_{hk_{h}}}}}\\
    \ncline[nodesep=3pt,linestyle=dotted]{-}{22}{32}
    \ncline[nodesep=3pt,linestyle=dotted]{-}{24}{34}
    \end{array}\\[-.5cm]
S_{\top}
\end{array}
                          }} 
    \end{array}
  \end{array}
                          }} 
    \end{array}
  \end{array}
  \]

We show that there is a modified pivoted \thsat, $\Psi_{T}$ so that $\Psi_{T}$ is unsatisfiable if and only if $\Psi$ is unsatisfiable. 

Let
\[
\{r_{21},\neg r_{21},\!\dots,
r_{1k_{1}},\neg r_{1k_{1}},\!\dots,r_{h1},\neg r_{h1},\!\dots,r_{hk_{h}},\neg r_{h_{h}}\}
\]
be a set of literals disjoint from the set $\letter(\Psi)$. 
Let $\Psi_{T}$ be,
\[
\begin{array}{l}
  (p_{11}\vee S_{p_{11}})\wedge(\neg p_{11}\vee S_{\neg p_{11}})\wedge\dots\wedge
  (p_{1k_{1}}\vee S_{p_{1k_{1}}})\wedge(\neg p_{1k_{1}}\vee S_{\neg p_{1k_{1}}})\wedge\\
 (r_{21}\vee (S_{p_{21}}\wedge\neg p_{21}))\wedge
 (\neg r_{21}\vee (S_{\neg p_{21}}\wedge p_{21}))\wedge\dots\wedge\\
(r_{2k_{2}}\vee (S_{p_{2k_{2}}}\wedge\neg p_{2k_{2}}))\wedge
(\neg r_{2k_{2}}\vee (S_{\neg p_{2k_{2}}}\wedge p_{2k_{2}}))\wedge\dots\wedge\\
(r_{h1}\vee (S_{p_{h1}}\wedge\neg p_{h1}))\wedge
(\neg r_{h1}\vee (S_{\neg p_{h1}}\wedge p_{h1}))\wedge\dots\wedge\\
(r_{hk_{h}}\vee (S_{p_{h1}}\wedge\neg p_{h1}))\wedge
(\neg r_{hk_{h}}\vee (S_{\neg p_{h1}}\wedge p_{h1}))\wedge
 (t\vee S_{\top})\wedge(\neg t\vee S_{\top})
  \end{array}
\]

Write $\Psi$ as its factorized version,
\[
\begin{array}{l}
  (p_{11}\vee S_{p_{11}}) \wedge (\neg p_{11}\vee S_{\neg p_{11}}) 
  \wedge \dots \wedge
  (p_{1k_{1}}\vee S_{p_{1k_{1}}}) \wedge (\neg p_{1k_{1}}\vee S_{\neg p_{1k_{1}}})\\
  (p_{21}\vee S_{p_{21}})\wedge
  (\neg p_{21}\vee S_{\neg p_{21}}) 
  \wedge \dots \wedge (p_{2k_{2}}\vee S_{p_{2k_{2}}}) \wedge 
  (\neg p_{2k_{2}}\vee S_{\neg p_{2k_{2}}}) \\
  \wedge 
  (\neg p_{2k_{2}}\vee S_{\neg p_{2k_{2}}}) \\
  \dots\\
(p_{h1}\vee S_{p_{h1}}) \wedge 
  (\neg p_{h1}\vee S_{\neg p_{h1}}) \wedge \!\!\dots\!\! \wedge
  (p_{hk_{h}}\vee S_{p_{hk_{h}}}) \wedge 
  (\neg p_{hk_{h}}\vee S_{\neg p_{hk_{h}}})\\
  \wedge S_{\top} \\
\end{array}
\]

We show that $\Psi_{T}$ is unsatisfiable if and only if $\Psi$ is unsatisfiable. 

Let $\Sigma$ be the set of all mappings from $\{11,\dots,1k_{1},\dots,h1,\dots,hk_{h}\}$ onto $\{True,False\}$.
Note that $\Sigma$ has $2^{k_{1}+\dots+k_{h}}$ elements.

Let $\sigma\in\Sigma$. For all conjugated pair $\{p_{uv},\neg p_{uv}\}$,
$1\leq u\leq h$ and $1\leq v\leq k_{v}$, let $\epsilon_{uv}=p_{uv}$, if $\sigma(p_{uv})=True$ and, otherwise, $\epsilon_{uv}=\neg p_{uv}$.
Let $\upsilon_{uv}$ be $r_{uv}$ if $\epsilon_{uv}=\neg p_{uv}$ and, otherwise  $\upsilon_{uv}=\neg r_{uv}$.
Given a Boolean formula $F$, $F/\sigma$ denotes the substitution of all literals in $F$ by its value over $\sigma$.

Using the valuation $\sigma$, we have, respectively, for $\Psi$ and $\Psi_{T}$,
\[
\begin{array}{l}
  S_{\neg \epsilon_{11}}/\sigma\wedge \dots \wedge S_{\neg \epsilon_{1k_{1}}}/\sigma\wedge
  S_{\neg \epsilon_{21}}/\sigma\wedge \dots \wedge S_{\neg \epsilon_{2k_{2}}}/\sigma\wedge 
  \dots\wedge\\
  S_{\neg \epsilon_{h1}}/\sigma \wedge \dots
  \wedge S_{\neg \epsilon_{hk_{h}}}/\sigma
  \wedge S_{\top} 
\end{array}
\]
and
\[
\begin{array}{l}
 S_{\neg \epsilon_{11}}/\sigma\wedge \dots \wedge S_{\neg \epsilon_{1k_{1}}}/\sigma\wedge
 \neg\upsilon_{21}\wedge (\upsilon_{21}\vee S_{\neg\epsilon_{21}}/\sigma)\wedge\dots\wedge\\
  \neg\upsilon_{2k_{2}}\wedge (\upsilon_{2k_{2}}\vee S_{\neg\epsilon_{2k_{2}}}/\sigma)\wedge
 \neg\upsilon_{h1}\wedge(\upsilon_{h1}\vee S_{\neg \epsilon_{h1}}/\sigma)\wedge\dots\wedge\\
 \neg\upsilon_{hk_{h}}\wedge(\upsilon_{hk_{h}}\vee S_{\neg \epsilon_{h1}}/\sigma)
 \wedge S_{\top} 
  \end{array}
\]

Whether at least one $\neg\upsilon_{lm}$ is {\em false} then, $\Psi_{T}$ is false. Otherwise, all $\neg\upsilon_{lm}$ is {\bf true} and $\Psi_{T}$ is
\[
\begin{array}{l}
  S_{\neg \epsilon_{11}}/\sigma\wedge \dots \wedge S_{\neg \epsilon_{1k_{1}}}/\sigma\wedge
  S_{\neg \epsilon_{21}}/\sigma\wedge \dots \wedge S_{\neg \epsilon_{2k_{2}}}/\sigma\wedge 
  \dots\wedge\\
  S_{\neg \epsilon_{h1}}/\sigma \wedge \dots
  \wedge S_{\neg \epsilon_{hk_{h}}}/\sigma
  \wedge S_{\top} 
\end{array}
\]
and, under this conditions of valuation, $\Psi$ is unsatisfiable if and only  if $\Psi_{T}$ is unsatisfiable.
\section{Bounds on Computation}\label{BC:sec}
Open this section with the study of the bounds in Space,
$\mathbf{SPACE}(f(n))$ and in time, $\mathcal{O}(f(n))$. Both bounds
ensure that we do not trade space by time or vice-versa in our
considerations. 

Given a pivoted \thsat\ formula, we follow the polynomially bounded steps,
\begin{enumerate}
\item Write the cylindrical digraph;
\item Write the closed digraphs;
\item Write the linearized digraph;
\item Write the maximal nested digraphs. A special care has to be taken on
  writing the maximal nested digraph contained in the intersection.
\end{enumerate}

We prove  less straightforward bounds in time and space.

\begin{lemma}\label{MAXLINPOLI:lmm}
The cylindrical digraph, $\clndr=\langle V,E,label\rangle$ is written in
polynomial time and space. 
\end{lemma}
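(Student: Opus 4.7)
The plan is to bound separately the number of vertices, the number of edges, the total size of all labels, and finally the time required to assemble these three sets from the input pivoted formula $\Psi$.

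First, I would fix notation: let $n$ denote the number of distinct atoms appearing in $\Psi$ (pivots plus entry atoms) and let $s$ be the number of clauses of $\Psi$. The input size is then $\Theta(s)$, since each clause has at most three literals. The vertex set of $\clndr$ is exactly the set of literals and their negations, so $|V|\leq 2n\leq 6s$, which is linear in the input size.

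Next I would bound the edges. By Definition \ref{CYL:dfn}, each disjunction $a\vee b$ appearing in some entry contributes the two edges $\neg a\Rightarrow b$ and $\neg b\Rightarrow a$. A pivoted clause $(a_{i}\vee p\vee q)$ contributes exactly one such disjunction $(p\vee q)$ to the entry set of $a_{i}$, so the number of edges is at most $2s$. The label of an edge is the list of entry positions $ij$ in which the corresponding disjunction $(p\vee q)$ occurs; summed over all edges, the total label size is at most twice the total number of disjunction occurrences in entries, again $O(s)$.

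For the time bound I would describe a single pass over the clauses of $\Psi$: for each clause, identify the pivot and the two remaining literals $p,q$, insert $\neg p$, $\neg q$, $p$, $q$ in $V$ if absent (using a dictionary keyed by literal), then insert or update the edges $\neg p\Rightarrow q$ and $\neg q\Rightarrow p$, appending the current entry index $ij$ to each of their labels. Each clause is processed in $O(\log n)$ time with balanced search trees (or $O(1)$ amortized with hashing), giving total construction time $O(s\log s)$ and total space $O(s)$, which is polynomial in the input size. The only mild subtlety, and the one place I would take care, is that the same disjunction $(p\vee q)$ may appear in multiple entries; the accounting above must charge each occurrence to one label extension rather than to a new edge, which is handled by the dictionary lookup on the edge endpoints before insertion.
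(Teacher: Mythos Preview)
Your proposal is correct and uses the same structural decomposition as the paper: bound $|V|$, $|E|$, and the total label size separately, then argue that assembling them is polynomial. The paper, however, takes the number of literals $|L|$ as its parameter and is content with quadratic bounds (e.g.\ $|E|\leq |L|^{2}$, and the number of clauses bounded by $|L|^{2}$ for each pivot), while you take the clause count $s$ as input size and obtain the tighter linear bounds $|V|,|E|,\sum|label(e)|=O(s)$. You also give an explicit single-pass construction with dictionary lookups to substantiate the time bound, which the paper omits; it merely asserts that writing the labels ``requires linear time over the square of the set of clauses.'' Your argument is therefore sharper and more self-contained, but both reach the polynomial conclusion via the same decomposition.
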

\begin{proof}
The size of $\clndr$-graph is given by
\begin{enumerate}
\item The size of $V$, the set of vertices is the size of literals, $L$;
\item The size of $E$ is bounded by the square of the number of literals, $L$
  $|E|\leq |L|^{2}$\index{$|E|\leq |L|^{2}$}. Indeed, the arrows are bounded
  by the number of arrows in a polyhedron with $|L|$ sides.
\end{enumerate}
\end{proof}

There is an algorithm polynomially bounded in time to write a closed digraph.
We describe the, polynomial in time, search for nonempty intervals $[a,\neg a]$. Recall that no loops are allowed. Indeed, if we write a loop, that means that
if an incompatible combination $Inc=ij_{1},\dots, ij_{r}$, any compatible
combination that contains $Inc$ is likewise incompatible.

Divide the search algorithm into two procedures,
  \begin{enumerate}
  \item Write $[a,q[$, the subdigraph of $\clsddg$ of all sequences
    connected to $a$ (no specified end, just source, $a$);
  \item If $[a,q[\neq\emptyset$, from $[a,q[$, write $[a,\neg a]$,
    the subdigraph of paths between $a$ and $\neg a$ contained in $[a,q[$. 
  \end{enumerate}
 \begin{pseudocode}\label{INTRVL:psc}
      {\bf Part I, write an Interval $[a,q[$}
  \begin{algorithmic}          
   \State Input cylindrical digraph $Cyl=\langle V,E,label\rangle$.
   \State $[a,q[=\langle \{a\},\emptyset,\emptyset\rangle=
                 \langle V_{[a,q[},E_{[a,q[},label_{[a,q[}\rangle$
\Comment{Initial State}
\State $E_{aux}=E\setminus E_{[a,q[}$
\State $V_{aux}=V\setminus V_{[a,q[}$
\Comment{No loops}
\While{$E_{aux}\neq\emptyset$}
\State $E_{[a,q[}\gets E_{[a,q[}\cup\{c\Rightarrow b|b\in V_{aux}~AND~
            c\Rightarrow b\in E\} $
\State $V_{[a,q[}\gets V_{[a,q[}\cup\{b|\exists c\in V_{[a,q[}
               (c\Rightarrow b\in E_{aux})\}$
\State $label_{[a,q[}\gets label_{[a,q[}\cup\{label(c\Rightarrow b)|
                    \exists b\in V_{[a,q[} (c\Rightarrow b\in E_{aux}) \}$
\State $E_{aux}=E\setminus E_{[a,q[}$
\State $V_{aux}=V\setminus V_{[a,q[}$
\EndWhile
\end{algorithmic}
 \end{pseudocode}
 
Once we obtain
$[a,q[=\langle V_{[a,q[},E_{[a,q[},label_{[a,q[}\rangle$, if
$\neg a\in V_{[a,q[}$, we perform the above algorithm backwards, that is
we start with
$]q,\neg a]=\langle \{\neg a\},\emptyset,\emptyset\rangle=
\langle V_{]q,\neg a]},E_{]q,\neg a]},label_{]q,\neg a]}\rangle$
and perform Pseudocode \ref{INTRVL:psc} taking care
to search over arrows $b\Rightarrow\neg a$ and obtain $[a,\neg a]$.

\begin{observation}
  As Pseudocode \ref{INTRVL:psc}, at each iteration, updates the set
  $E_{aux}$, we avoid sequences of the form,
  \[
  r\Rightarrow q_{1}\Rightarrow \dots\Rightarrow r
  \] 
  that is, we do not have loops.
\end{observation}

Let us scrutinize the size of the closed digraph.
\begin{lemma}\label{SIZECLSD:lmm}
  The set of closed digraphs is the union of at most $|V|$ digraphs with
  $|V|^{2}$ vertices and $|V|^{3}$ edges.
\end{lemma}
\begin{proof}
  The set of necessarily true pairs has, at most, the size of $|V|$, the number
  of literals. Each interval $[\neg p,p]$ has its size bounded by the size
  of the cylindrical digraph, that is, at most $|V|$ vertices and $|V|^{2}$
  edges, that is, a total of $|V|^{2}$ vertices and $|V|^{3}$ edges.

  Obtain the union of, at most, $|V|$ digraphs with $|V|$ vertices and $|V|^{2}$
  edges, $\mathbf{SPACE}(|V|^{3})$.
\end{proof}

\begin{lemma}\label{TIMECLSD:lmm}
The search for the set of closed digraphs is polynomially in time.
\end{lemma}
\begin{proof}
  We perform at most $|V|$ searches, one to each interval $[p,\neg p]$
  for $p$ necessarily true.

  We search over the number of vertices and obtain,
  \[
  \begin{array}{l}
    |V|-1\\
    |V|-1-r_{1}\\
    \vdots\\
    |V|-1-r_{1}-\dots-r_{q}
  \end{array}
  \]
  where $1,r_{1},\dots,r_{q}$ are the vertices we removed from the vertices
  that form the set of edges $E_{aux}$.

  The maximum under the conditioning $1+r_{1}+\dots+r_{q}=V$, leads to a maximum
  $r_{1}=\dots=r_{q}=V/q$ and the search goes to a top of $|V|^{2}$ searches
  over $|V|$ vertices and the search is bonded by $|V|^{3}$ in time.
\end{proof}

Continue reviewing the effort to Lift, Multiply Branches and
the Addition of Label. After that, we must analyze the whole process of
writing the nested digraphs associated with each edge.

\begin{proposition}\label{LAM:prp}
  Given a closed digraph, the operations of Lifting, Adding Labels
  and Multiplying Branches are linearly bounded.
\end{proposition}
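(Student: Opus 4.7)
The plan is to fix a size measure $n = |V| + |E| + \sum_{e\in E}|label(e)|$ for the input closed digraph and, for each of the three operations, to count separately the number of vertices created, edges created, label entries written, and traversal steps performed, and then verify each count is $O(n)$.

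For Lifting (Definition \ref{DIFCTTN:dfn}) at a branching $v$, the operation touches only the edges incident to $v$: it removes at most $\deg(v)$ such edges, creates at most $\deg(v)$ copies $(v,j)$, installs the same number of replacement edges, and copies the associated labels verbatim. Since $\deg(v)\leq |E|$ and each label copy costs $O(|label(e)|)$, the total is $O(n)$; the subcases ``Lifting by $R$'' and ``Lifting by $B$'' are symmetric and handled identically. For Adding a Label (Definition \ref{ADDLBL:dfn}) at $a$, one forward and one backward traversal from $a$ produce $Up_a$ and $Down_a$ in $O(|V|+|E|)$ time; a single pass then appends one fresh literal ($p$ or $\neg p$) to each traversed edge's label at constant cost per edge, again $O(n)$. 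In both of these cases linearity is visible from a direct reading of the definition.

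Multiplication (Definition \ref{OPERATIONS2:dfn}) requires a more careful accounting, because replicating a branch of length $j$ by a factor $m$ produces $m(j+1)$ new vertices and $mj$ new edges. The key observation is that the operation is only invoked inside Procedure \ref{CSD2LIN:prp}, where $m$ equals the branching degree at the common endpoint $u$: the $m$ distinct maximal branches meeting $u$ are already present in the input, and the multiplied branch contributes its own $j$ edges, so both $m$ and $j$ are bounded by $|E|$, and the product $mj$ is bounded by the total size of the maximal-branch system already sitting at the branching, hence by $n$. Each newly created vertex, edge, and label entry is emitted in constant time by a straightforward nested loop, giving $O(n)$ overall. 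This multiplication step is the only place where the linear bound requires a non-obvious argument, and I expect it to be the main obstacle; once it is settled, combining the three estimates gives the proposition.
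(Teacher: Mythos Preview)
Your write-up is considerably more careful than the paper's own proof, which is three short sentences --- one per operation --- with no explicit size measure and no separate accounting of vertices, edges, and labels. Your treatment of Lifting and Adding a Label is fine and matches in spirit what the paper says, just with more detail.

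The gap is in your Multiplication paragraph. You correctly flag it as the only non-obvious case, but the argument you give does not establish $mj \le n$. From ``$m$ distinct maximal branches meeting $u$ are already present in the input, and the multiplied branch contributes its own $j$ edges'' you can only conclude that $m + j \le |E| \le n$; this does not bound the product $mj$, which in the worst case is of order $|E|^{2}$. Concretely, if $m$ short branches of length~$1$ meet $u$ and the branch being replicated has length~$j$, the output of a single Multiplication has $\Theta(mj)$ edges while the input had only $\Theta(m+j)$. So either the statement must be read more loosely than ``linear in $n$'' (e.g.\ ``bounded by a low-degree polynomial''), or one needs an additional structural constraint on how $m$ and $j$ interact in Procedure~\ref{CSD2LIN:prp} that you have not supplied. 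The paper's own one-line justification (``Multiplication of branches is bounded by the maximum number of the branches that incide or irradiate from a branching'') bounds $m$ but is silent on $j$, so it does not resolve this point either; you should not expect to recover a genuinely linear bound here without further assumptions.
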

\begin{proof}
  Lifting together with Multiplication of branches multiply the number of any
  branching $a$ accordingly to the maximum of branches in $Up(a)$ or $Down(a)$.
 
  Adding Label adds several new labels to, at most the number of edges
  in the closed digraph its addition is bounded by the number of branching
  and Multiplication of branches is bounded by the maximum number of
  branches, that is, the maximum number if edges, $|V|^{3}$.
\end{proof}

Our next question is: {\em What the size of the Linearized Digraph is?}

The set of closed digraphs is the union of at most $|V|$ digraphs. Each closed
digraph has the size bounded by $|V|^{2}$ vertices and $|V|^{3}$ edges, by
Lemma \ref{SIZECLSD:lmm}. As we deal with linear digraphs, the number of edges
is bounded by the number of vertices in the linearized digraph.

We consider a maximum multiplication of branches. The maximum of multiplication operations we perform is bounded by $|V|^{2}$ vertices. Suppose, in the very pessimistic valuation  we have $|V|^{2}$ columns with $|V|^{2}$ edges.

The edges are bounded by $|V|^{2}$ because we have linear digraphs and the vertices do not form branching.

\begin{lemma}\label{SIZENESTED:lmm}
  Given a Linearized digraph endowed with $|V|^{2}$ columns with $|V|^{2}$ edges.
  Then, the process of building the nested digraphs has the following bounds,
\end{lemma}
\begin{proof}
  {\bf First Step:} In the first step, each vertex in the first column
  interacts with each vertex on the remainder columns, that is, we built at most
  $|V|\times(|V|\times(|V|-1))$ nested digraphs, Each rooted digraph, has a
  maximum of $|V|$ edges because all vertices are connected to a single root.
  Henceforth, we use a maximum of $V|^{2}$ edges to keep optimization simpler.
  
  {\bf $\mathbf{r^{th}}$ Step:} We have at most $|V|$ interactions over
  $|V|\times(|V|-r-1)$ vertices that form the root associated to the vertex
  of each column. That is, we generate $|V|\times(|V|-r-1)$ rooted digraphs.
  Each digraph has depth $r$ and each column has a maximum of $|V|$
  vertices. We have a total bounded by a maximum of $r\times |V|$ vertices
  and $r\times |V|^{2}$ edges.

  {\bf ${\mathbf{|V|^{th}}}$ Step:} In the $|V|-1$ step, obtain a maximum of
  $|V|$ digraphs with depth $|V|$. We have a top of $|V|$ vertices in
  each column, that is, $|V|^{2}$ vertices and $|V|^{3}$ edges.
\end{proof}

\begin{proposition}\label{MAXNSTD:prp}
  Given two nested rooted digraphs, whose root is $v$, endowed with
  $R\times T$ vertices, located in $T$ columns, $R$ vertices at each column.
  $R^{2}\times T$ edges, then the operation of writing the maximal nested
  digraph contained in their intersection is polynomially bounded.
\end{proposition}
\begin{proof}
  The first algorithm shows the union of linear digraphs whose
  labels are ordered by $\subseteq$.

  From $m$ ranging from levels 1 and 2 until $T-2$ and $T-1$, search over
  a maximum of $S\times T$ edges. At each edge, we perform searches over
  $S$ edges whose vertices lie in levels $m+1,m+2$. Obtain at most
  $S^{2}$ interactions over levels $m+1,m+2$. At the end of the process,
  obtain a total of $S^{2}\times T$ searches.
  
  In the second pseudocode, for edges in the levels $m=T$ and $T-1$,
  step $-1$ until levels $m=2$ and $1$, write, for any edge
  $u^{m}_{i}\rightarrow v^{m-1}_{j}$, the subdigraph $G_{u^{m}_{i},v^{m-1}_{j}}$
  that contains all maximal sequences that start at level $k$, pass through
  $u^{m}_{i}\rightarrow v^{m-1}_{j}$ and end in $v$. Rule out from the set of
  labels any edge in the level $m$, except $u^{m}_{i}$ and in the level $m-1$,
  except $v^{m-1}_{j}$.

  Each edge visits, in a maximum of $|V|^{2}$ edges. We have, in the worst
  case, are bounded by the number of closed digraphs, and the use of dimension
  $|V|^{3}$. so, we search  $|V|^{5}$ in time.  
\end{proof}
\section{Examples}\label{EX:sec}
\begin{example}\label{EX1:exp}
Consider the $\clsddg$,
\[
\begin{array}{c@{\hskip 80pt}c}
\rnode{11}{p_{1}} & \rnode{12}{p_{2}}\\[.4cm] 
\rnode{21}{p_{3}} & \rnode{22}{p_{4}}\\[.4cm] 
\rnode{31}{p_{5}} & \rnode{32}{p_{6}}\\[.4cm] 
\rnode{41}{p_{7}} & \rnode{42}{p_{8}}
\ncline[doubleline=true,nodesep=5pt]{->}{11}{21}
\ncline[doubleline=true,nodesep=5pt]{->}{11}{22}
\ncline[doubleline=true,nodesep=5pt]{->}{12}{21}
\ncline[doubleline=true,nodesep=5pt]{->}{12}{22}
\ncline[doubleline=true,nodesep=5pt]{->}{21}{31}
\ncline[doubleline=true,nodesep=5pt]{->}{22}{32}
\ncline[doubleline=true,nodesep=5pt]{->}{21}{32}
\ncline[doubleline=true,nodesep=5pt]{->}{22}{31}
\ncline[doubleline=true,nodesep=5pt]{->}{31}{41}
\ncline[doubleline=true,nodesep=5pt]{->}{31}{42}
\ncline[doubleline=true,nodesep=5pt]{->}{32}{41}
\ncline[doubleline=true,nodesep=5pt]{->}{32}{42}
\end{array}
\]

We do not focus on the compatibility among vertices. It is not relevant for
this analysis.

The task of writing the chains is clearly expsize, as we show below\\
\begin{tabular}{llllllllllllllllllllllll
  lllllllllllllllllllllll}
 $p_{1}$ & $p_{1}$ & $p_{1}$ & $p_{1}$ & $p_{1}$ & $p_{1}$ & $p_{1}$ & $p_{1}$ &
 $p_{2}$ & $p_{2}$ & $p_{2}$ & $p_{2}$ &  $p_{2}$ & $p_{2}$ & $p_{2}$ & $p_{2}$ \\
$p_{3}$ & $p_{3}$ & $p_{3}$ & $p_{3}$ & $p_{4}$ & $p_{4}$ & $p_{4}$&$p_{4}$ &
$p_{3}$ & $p_{3}$ & $p_{3}$ & $p_{3}$ & $p_{4}$ & $p_{4}$ & $p_{4}$&$p_{4}$\\
$p_{5}$ & $p_{5}$ & $p_{6}$ & $p_{6}$ & $p_{5}$ &$p_{5}$ & $p_{6}$ &$p_{6}$&
 $p_{5}$ &$p_{5}$ &$p_{6}$&$p_{6}$&  $p_{5}$ &  $p_{5}$& $p_{6}$& $p_{6}$ \\
$p_{7}$ & $p_{8}$ & $p_{7}$ & $p_{8}$ & $p_{7}$ &$p_{8}$ & $p_{7}$ &$p_{8}$&
 $p_{7}$ &$p_{8}$ &$p_{7}$&$p_{8}$&  $p_{7}$ &  $p_{8}$& $p_{7}$& $p_{8}$ 
\end{tabular}

Solve using our method. Multiply roots, 
\[
\begin{array}{c@{\hskip 70pt}cc@{\hskip 70pt}c}
\rnode{11}{p_{1}} &&&   \rnode{12}{p_{2}}\\[.4cm] 
\rnode{21}{p_{3}} &&&   \rnode{22}{p_{4}}\\[.4cm] 
\rnode{31}{p_{5}} & &&  \rnode{32}{p_{6}}\\[.4cm] 
\rnode{411}{p_{7}} &\rnode{421}{p_{8}}&\rnode{412}{p_{7}}& \rnode{422}{p_{8}}
\ncline[doubleline=true,nodesep=5pt]{->}{11}{21}
\ncline[doubleline=true,nodesep=5pt]{->}{11}{22}
\ncline[doubleline=true,nodesep=5pt]{->}{12}{21}
\ncline[doubleline=true,nodesep=5pt]{->}{12}{22}
\ncline[doubleline=true,nodesep=5pt]{->}{21}{31}
\ncline[doubleline=true,nodesep=5pt]{->}{22}{32}
\ncline[doubleline=true,nodesep=5pt]{->}{21}{32}
\ncline[doubleline=true,nodesep=5pt]{->}{22}{31}
\ncline[doubleline=true,nodesep=5pt]{->}{31}{411}
\ncline[doubleline=true,nodesep=5pt]{->}{31}{421}
\ncline[doubleline=true,nodesep=5pt]{->}{32}{412}
\ncline[doubleline=true,nodesep=5pt]{->}{32}{422}
\end{array}
\]

Add labels, obtain, 
\[
\begin{array}{c@{\hskip 70pt}c@{\hskip 70pt}c@{\hskip 70pt}c}
\rnode{11}{p_{1}} &&& \rnode{12}{p_{2}}\\[.6cm] 
\rnode{21}{p_{3}} &&& \rnode{22}{p_{4}}\\[.5cm] 
\rnode{31}{p_{5}} &&& \rnode{32}{p_{6}}\\[.5cm] 
\rnode{411}{p_{7}} &\rnode{421}{p_{8}}&\rnode{412}{p_{7}}& \rnode{422}{p_{8}}
\ncarc[doubleline=true,nodesep=5pt]{->}{11}{21}\mput*{_{\neg a\neg b}}
\ncarc[doubleline=true,nodesep=5pt]{->}{11}{22}\mput*{_{\neg a\neg b}}
\ncline[doubleline=true,nodesep=5pt]{->}{12}{21}\mput*{_{\neg a\neg b}}
\ncline[doubleline=true,nodesep=5pt]{->}{12}{22}\mput*{_{\neg a\neg b}}
\ncline[doubleline=true,nodesep=5pt]{->}{21}{31}\mput*{_{\neg a}}
\ncline[doubleline=true,nodesep=5pt]{->}{22}{32}\mput*{_{\neg b}}
\ncarc[doubleline=true,nodesep=5pt]{->}{21}{32}\mput*{_{\neg b}}
\ncarc[doubleline=true,nodesep=5pt]{->}{22}{31}\mput*{_{\neg a}}
\ncline[doubleline=true,nodesep=5pt]{->}{31}{411}\mput*{_{a}}
\ncline[doubleline=true,nodesep=5pt]{->}{31}{421}\mput*{_{a}}
\ncline[doubleline=true,nodesep=5pt]{->}{32}{412}\mput*{_{b}}
\ncline[doubleline=true,nodesep=5pt]{->}{32}{422}\mput*{_{b}}
\end{array}
\]
Lifting,
\[
\begin{array}{c@{\hskip 70pt}c@{\hskip 70pt}c@{\hskip 70pt}c}
\rnode{11}{p_{1}} &&& \rnode{12}{p_{2}}\\[.6cm] 
\rnode{21}{p_{3}} &&& \rnode{22}{p_{4}}\\[.5cm] 
\rnode{311}{p_{5}} & \rnode{322}{p_{6}}&\rnode{312}{p_{5}}
    &\rnode{321}{p_{6}} \\[.6cm] 
\rnode{411}{p_{7}}& \rnode{422}{p_{8}} &\rnode{421}{p_{8}}&\rnode{412}{p_{7}}
\ncarc[doubleline=true,nodesep=5pt]{->}{11}{21}\mput*{_{\neg a\neg b}}
\ncarc[doubleline=true,nodesep=5pt]{->}{11}{22}\mput*{_{\neg a\neg b}}
\ncline[doubleline=true,nodesep=5pt]{->}{12}{21}\mput*{_{\neg a\neg b}}
\ncline[doubleline=true,nodesep=5pt]{->}{12}{22}\mput*{_{\neg a\neg b}}
\ncline[doubleline=true,nodesep=5pt]{->}{21}{311}\mput*{_{\neg a}}
\ncline[doubleline=true,nodesep=5pt]{->}{22}{321}\mput*{_{\neg b}}
\ncarc[doubleline=true,nodesep=5pt]{->}{21}{322}\mput*{_{\neg b}}
\ncarc[doubleline=true,nodesep=5pt]{->}{22}{312}\mput*{_{\neg a}}
\ncline[doubleline=true,nodesep=5pt]{->}{311}{411}\mput*{_{a}}
\ncline[doubleline=true,nodesep=5pt]{->}{312}{421}\mput*{_{a}}
\ncline[doubleline=true,nodesep=5pt]{->}{321}{412}\mput*{_{b}}
\ncline[doubleline=true,nodesep=5pt]{->}{322}{422}\mput*{_{b}}
\end{array}
\]
Add labels, obtain, 
\[
\begin{array}{c@{\hskip 70pt}c@{\hskip 70pt}c@{\hskip 70pt}c}
\rnode{11}{p_{1}} &&& \rnode{12}{p_{2}}\\[.7cm] 
\rnode{21}{p_{3}} &&& \rnode{22}{p_{4}}\\[.5cm] 
\rnode{311}{p_{5}} & \rnode{322}{p_{6}}&\rnode{312}{p_{5}}
    &\rnode{321}{p_{6}} \\[.7cm] 
\rnode{411}{p_{7}}& \rnode{422}{p_{8}} &\rnode{421}{p_{8}}&\rnode{412}{p_{7}}
\ncarc[doubleline=true,nodesep=5pt]{->}{11}{21}\mput*{_{\neg a\neg b\neg c}}
\ncarc[doubleline=true,nodesep=5pt]{->}{11}{22}\mput*{_{\neg a\neg b\neg d}}
\ncline[doubleline=true,nodesep=5pt]{->}{12}{21}\mput*{_{\neg a\neg b\neg c}}
\ncline[doubleline=true,nodesep=5pt]{->}{12}{22}\mput*{_{\neg a\neg b\neg d}}
\ncline[doubleline=true,nodesep=5pt]{->}{21}{311}\mput*{_{\neg a c}}
\ncline[doubleline=true,nodesep=5pt]{->}{22}{321}\mput*{_{\neg bd}}
\ncarc[doubleline=true,nodesep=5pt]{->}{21}{322}\mput*{_{\neg bc}}
\ncarc[doubleline=true,nodesep=5pt]{->}{22}{312}\mput*{_{\neg ad}}
\ncline[doubleline=true,nodesep=5pt]{->}{311}{411}\mput*{_{ac}}
\ncline[doubleline=true,nodesep=5pt]{->}{312}{421}\mput*{_{ad}}
\ncline[doubleline=true,nodesep=5pt]{->}{321}{412}\mput*{_{bd}}
\ncline[doubleline=true,nodesep=5pt]{->}{322}{422}\mput*{_{bc}}
\end{array}
\]
Lifting,
\[
\begin{array}{c@{\hskip 70pt}c@{\hskip 70pt}c@{\hskip 70pt}c}
\rnode{11}{p_{1}} &&& \rnode{12}{p_{2}}\\[.7cm] 
\rnode{211}{p_{3}} &
     \rnode{221}{p_{4}}&\rnode{212}{p_{3}} & \rnode{222}{p_{4}}\\[.5cm] 
\rnode{311}{p_{5}} & \rnode{321}{p_{6}}&\rnode{322}{p_{6}}&\rnode{312}{p_{5}}\\[.7cm] 
\rnode{411}{p_{7}}& \rnode{412}{p_{7}}&\rnode{422}{p_{8}}& \rnode{421}{p_{8}}
\ncarc[doubleline=true,nodesep=5pt]{->}{11}{211}\mput*{_{\neg a\neg b\neg c}}
\ncarc[doubleline=true,nodesep=5pt]{->}{11}{221}\mput*{_{\neg a\neg b\neg d}}
\ncline[doubleline=true,nodesep=5pt]{->}{12}{212}\mput*{_{\neg a\neg b\neg c}}
\ncline[doubleline=true,nodesep=5pt]{->}{12}{222}\mput*{_{\neg a\neg b\neg d}}
\ncline[doubleline=true,nodesep=5pt]{->}{211}{311}\mput*{_{\neg a c}}
\ncline[doubleline=true,nodesep=5pt]{->}{221}{321}\mput*{_{\neg bd}}
\ncarc[doubleline=true,nodesep=5pt]{->}{212}{322}\mput*{_{\neg bc}}
\ncarc[doubleline=true,nodesep=5pt]{->}{222}{312}\mput*{_{\neg ad}}
\ncline[doubleline=true,nodesep=5pt]{->}{311}{411}\mput*{_{ac}}
\ncline[doubleline=true,nodesep=5pt]{->}{312}{421}\mput*{_{ad}}
\ncline[doubleline=true,nodesep=5pt]{->}{321}{412}\mput*{_{bd}}
\ncline[doubleline=true,nodesep=5pt]{->}{322}{422}\mput*{_{bc}}
\end{array}
\]
Lifting,
\[
\begin{array}{c@{\hskip 70pt}c@{\hskip 70pt}c@{\hskip 70pt}c}
  \rnode{111}{p_{1}} &\rnode{112}{p_{1}}&\rnode{121}{p_{2}}&
     \rnode{122}{p_{2}}\\[.7cm] 
\rnode{211}{p_{3}} &
     \rnode{221}{p_{4}}&\rnode{212}{p_{3}} & \rnode{222}{p_{4}}\\[.5cm] 
\rnode{311}{p_{5}} & \rnode{321}{p_{6}}&\rnode{322}{p_{6}}&\rnode{312}{p_{5}}\\[.7cm] 
\rnode{411}{p_{7}}& \rnode{412}{p_{7}}&\rnode{422}{p_{8}}& \rnode{421}{p_{8}}
\ncarc[doubleline=true,nodesep=5pt]{->}{111}{211}\mput*{_{\neg a\neg b\neg c}}
\ncarc[doubleline=true,nodesep=5pt]{->}{112}{221}\mput*{_{\neg a\neg b\neg d}}
\ncline[doubleline=true,nodesep=5pt]{->}{121}{212}\mput*{_{\neg a\neg b\neg c}}
\ncline[doubleline=true,nodesep=5pt]{->}{122}{222}\mput*{_{\neg a\neg b\neg d}}
\ncline[doubleline=true,nodesep=5pt]{->}{211}{311}\mput*{_{\neg a c}}
\ncline[doubleline=true,nodesep=5pt]{->}{221}{321}\mput*{_{\neg bd}}
\ncarc[doubleline=true,nodesep=5pt]{->}{212}{322}\mput*{_{\neg bc}}
\ncarc[doubleline=true,nodesep=5pt]{->}{222}{312}\mput*{_{\neg ad}}
\ncline[doubleline=true,nodesep=5pt]{->}{311}{411}\mput*{_{ac}}
\ncline[doubleline=true,nodesep=5pt]{->}{312}{421}\mput*{_{ad}}
\ncline[doubleline=true,nodesep=5pt]{->}{321}{412}\mput*{_{bd}}
\ncline[doubleline=true,nodesep=5pt]{->}{322}{422}\mput*{_{bc}}
\end{array}
\]

The only possible compatible combinations, according to compatibility of
labels we added, not considering compatibility among the original label, is,
\[
\begin{array}{l}
  p_{5}\Rightarrow p_{7},p_{6}\Rightarrow p_{7}, p_{5}\Rightarrow p_{8},
         p_{6}\Rightarrow p_{8}\\
  p_{5}\Rightarrow p_{7}, p_{4}\Rightarrow p_{6},
        p_{3}\Rightarrow p_{6}, p_{5}\Rightarrow p_{8}\\
  p_{3}\Rightarrow p_{5}, p_{6}\Rightarrow p_{7}, p_{6}\Rightarrow p_{8},
         p_{4}\Rightarrow p_{5}\\
 p_{3}\Rightarrow p_{5}, p_{4}\Rightarrow p_{6},
        p_{3}\Rightarrow p_{6}, p_{4}\Rightarrow p_{5}\\
 p_{3}\Rightarrow p_{5},p_{1}\Rightarrow p_{4},
        p_{2}\Rightarrow p_{3}, p_{4}\Rightarrow p_{5}\\
 p_{1}\Rightarrow p_{3}, p_{4}\Rightarrow p_{5}, p_{3}\Rightarrow p_{6},
         p_{2}\Rightarrow p_{4}\\
  p_{1}\Rightarrow p_{3}, p_{1}\Rightarrow p_{4},
        p_{2}\Rightarrow p_{3}, p_{2}\Rightarrow p_{4}
\end{array}
\]
\end{example}
\begin{example}\label{EX2:exp}
Consider the two pivoted formulas,
\[
\begin{array}{ll}
  \Psi_{1}=&a_{1}\vee\big( (p_{1}\vee q_{1})\wedge(p_{1}\vee \neg q_{1})\big)\wedge
     \neg a_{1}\vee\big( (p_{1}\vee q_{1})\wedge(p_{1}\vee \neg q_{1})\big)\wedge\\
 &a_{2}\vee\big( (\neg p_{1}\vee q_{2})\wedge(\neg p_{1}\vee \neg q_{2})\big)\wedge
       \neg a_{2}\vee\big( (p_{1}\vee q_{2})\wedge(p_{1}\vee \neg q_{2})\big)\\
  \Psi_{2}=&a_{1}\vee\big( (p_{1}\vee q_{1})\wedge(\neg p_{1}\vee q_{2})\big)\wedge
  \neg a_{1}\vee\big( (p_{1}\vee q_{1})\wedge(\neg p_{1}\vee q_{2})\big)\wedge\\
 &a_{2}\vee\big( (p_{1}\vee \neg q_{1})\wedge(\neg p_{1}\vee\neg  q_{2})\big)\wedge
       \neg a_{2}\vee\big( (p_{1}\vee \neg q_{1})\wedge(\neg p_{1}\vee q_{2})\big)
\end{array}
\]
\end{example}

The formulas $\Psi_{1}$ and $\Psi_{2}$ are, respectively, unsatisfiable and
satisfiable.

Write a fragment of the closed digraph,
\[
\begin{array}{c@{\hskip 50pt}c@{\hskip 50pt}c}
&\rnode{11}{\neg p_{1}} &\\[.1cm] 
\rnode{21}{q_{1}} && \rnode{22}{\neg q_{1}}\\[.1cm] 
&\rnode{31}{p_{1}} &\\[.3cm] 
\rnode{41}{q_{2}} && \rnode{42}{\neg q_{2}}\\[.1cm] 
&\rnode{51}{\neg p_{1}} &
\ncline[doubleline=true,nodesep=5pt]{->}{11}{21}\mput*{A}
\ncline[doubleline=true,nodesep=5pt]{->}{11}{22}\mput*{B}
\ncline[doubleline=true,nodesep=5pt]{->}{21}{31}\mput*{C}
\ncline[doubleline=true,nodesep=5pt]{->}{22}{31}\mput*{D}
\ncline[doubleline=true,nodesep=5pt]{->}{31}{41}\mput*{E}
\ncline[doubleline=true,nodesep=5pt]{->}{31}{42}\mput*{F}
\ncline[doubleline=true,nodesep=5pt]{->}{41}{51}\mput*{G}
\ncline[doubleline=true,nodesep=5pt]{->}{42}{51}\mput*{H}
\end{array}
\]

In the case of $\Psi_{1}$, the labels in $A$, $B$, $C$ and $D$ are
$a_{1},\neg a_{1}$ and $E$, $F$, $G$ and $H$, are $a_{2},\neg a_{2}$.

In the case of $\Psi_{2}$, the labels in are shown below,
$A= a_{1},\neg a_{1}$, $B= a_{2}, \neg a_{2}$,
$C= a_{2}, \neg a_{2}$, $D= a_{1}, \neg a_{1}$
$E= a_{1}, \neg a_{1}, a_{2}, \neg a_{2}$, $F= a_{2}$
$G= a_{2}$ $H= a_{1}, \neg a_{1}, \neg a_{2}$.

After lifting and adding labels, we obtain,
\[
\begin{array}{c@{\hskip 40pt}c@{\hskip 40pt}c}
\rnode{111}{\neg p_{1}} &&\rnode{112}{\neg p_{1}}\\[1cm] 
\rnode{21}{q_{1}} && \rnode{22}{\neg q_{1}}\\[1cm] 
\rnode{311}{p_{1}} &&\rnode{312}{p_{1}}\\[1cm] 
\rnode{41}{q_{2}} && \rnode{42}{\neg q_{2}}\\[1cm] 
\rnode{511}{\neg p_{1}} &&\rnode{512}{\neg p_{1}}
\ncline[doubleline=true,nodesep=5pt]{->}{111}{21}\mput*{A\neg r}
\ncline[doubleline=true,nodesep=5pt]{->}{112}{22}\mput*{B\neg r}
\ncline[doubleline=true,nodesep=5pt]{->}{21}{311}\mput*{C\neg r}
\ncline[doubleline=true,nodesep=5pt]{->}{22}{312}\mput*{D\neg r}
\ncline[doubleline=true,nodesep=5pt]{->}{311}{41}\mput*{E r}
\ncline[doubleline=true,nodesep=5pt]{->}{312}{42}\mput*{F r}
\ncline[doubleline=true,nodesep=5pt]{->}{41}{511}\mput*{G r}
\ncline[doubleline=true,nodesep=5pt]{->}{42}{512}\mput*{H r}
\end{array}
\]

Case $\Psi_{2}$, there are compatible combinations and, in the case of
$\Psi_{1}$, there is no compatible combination. We can infer $\Psi_{1}$ is not
satisfiable and, in the case of $\Psi_{2}$, the conclusion is based in writing
all combinations. 
\bibliographystyle{plain}
\addcontentsline{toc}{section}{Bibliography}
\bibliography{macrotableau}

\pagebreak
\begin{theindex}

  \item $Ch(e)$, 8
  \item $Ch\mathcal{P}\mapsto \clsddg$, 8
  \item $Down_{a}$, 15
  \item $Gr(B)$, 14
  \item $Gr(B\star R)=\langle E_{B\star R},   E_{B\star R},label\rangle$, 
		14
  \item $Gr(R)$, 14
  \item $Linclsd$, 17
  \item $Proj(e')$, 18
  \item $Proj:Linclsd\rightarrow\clsddg$, 18
  \item $Up_{a}$, 15
  \item $[\neg p_{l},p_{l}]\oplus   [p_{l},\neg p_{l}]$, 7
  \item $\Sigma$, set of all antichains in $\mathcal{P}$, 18
  \item $\Theta$, 18
  \item $\clndr=\langle V,E,label\rangle$, 5
  \item $\clsddg$, 7
  \item $\mathcal{NEC}$, 7
  \item $\mathcal{P}$, 8
  \item $maxG(v)$, 24
  \item \em branch, 13
  \item \thsat\ formula, 4
  \item \twsat\ formula, 4

  \indexspace

  \item addition of the new pair of literals to the set   of branching, 
		15
  \item antichain, 9

  \indexspace

  \item branching, 12

  \indexspace

  \item chain, 7
  \item chains, 8
  \item clause, 4
  \item Companion Digraph, 22
  \item compatible choice, 8
  \item compatible nested digraph of length $p$, 23
  \item compatible set of   entries, 8
  \item complete \thsat, 5
  \item cylindrical   digraph, 5

  \indexspace

  \item dept, 22

  \indexspace

  \item edge-labeled digraph for $v$, 22
  \item entails, 9
  \item entries, 4
  \item entry, 6
  \item Expansion of the Closed Digraph, 8
  \item expansion to $\mathcal{P}$, 8

  \indexspace

  \item incompatible, 8
  \item intersection of digraphs, $Gr_{v}\cap Gr'_{v}$, 23
  \item interval $[p,q]$, 7
  \item involves, 19

  \indexspace

  \item label of the   vertex $v$, 22
  \item Lifting by $B$ and $R$, 14
  \item Lifting of $Gr$ by $B$, 14
  \item Lifting of $Gr$ by $R$, 14
  \item linear interval, 13
  \item Linearized Closed Digraph, 17

  \indexspace

  \item maximal   nested path over a nested digraph, 27
  \item maximal branch, 13
  \item Multiplication of a   branch $br$, 15

  \indexspace

  \item necessarily true  literals, 7
  \item necessarily true literals, 7
  \item nested digraph of length $p$, 23

  \indexspace

  \item path, 6
  \item Pivot, 4
  \item Pivoted \thsat\ formula, 1, 4

  \indexspace

  \item root interval, 13
  \item roots, 12

  \indexspace

  \item satisfiable, 3
  \item set of closed digraph, 7
  \item set of entries of $\neg a_{i}$, 5
  \item set of entries of $a_{i}$, 5
  \item set of nested antichains, 27

  \indexspace

  \item tops, 12

  \indexspace

  \item union of digraphs, 23
  \item unsatisfiable, 3

\end{theindex}

\end{document}